\definecolor{shadecolor}{rgb}{1, 0.8, 0.3}
\def\th@plain{%
  \thm@notefont{}% same as heading font
  \itshape % body font
}
\def\th@definition{%
  \thm@notefont{}% same as heading font
  \normalfont % body font
}
\newcommand{\subparagraph}{}
\theoremstyle{plain}
\newtheorem{theorem}{Theorem}
\newtheorem{lemma}[theorem]{Lemma}
\newtheorem{proposition}[theorem]{Proposition}
\newtheorem{corollary}[theorem]{Corollary}
\theoremstyle{definition}
\newtheorem{definition}{Definition}
\newtheorem{example}{Example}
\theoremstyle{remark}
\newtheorem{remark}{Remark}
\newcommand{\bal}{\begin{align}}
\newcommand{\eal}{\end{align}}	
\newcommand{\beq}{\begin{equation}}
\newcommand{\eeq}{\end{equation}}
\newcommand{\bea}{\begin{eqnarray}}
\newcommand{\eea}{\end{eqnarray}}
\newcommand{\bean}{\begin{eqnarray*}}
\newcommand{\eean}{\end{eqnarray*}}
\newcommand{\bit}{\begin{itemize}}
\newcommand{\eit}{\end{itemize}}
\newcommand{\ben}{\begin{enumerate}}
\newcommand{\een}{\end{enumerate}}
\newcommand{\blem}{\begin{lem}}
\newcommand{\elem}{\end{lem}}
\newcommand{\bthm}{\begin{thm}}
\newcommand{\ethm}{\end{thm}}
\newcommand{\bpf}{\begin{IEEEproof}}
\newcommand{\epf}{\end{IEEEproof}}
\newcommand\defeq{\mathrel{\overset{\makebox[0pt]{\mbox{\normalfont\tiny\sffamily def}}}{=}}}
\newcommand*\thth[1]{$#1^{\text{\tiny th}}$}
\newcommand{\EE}{\ensuremath{\mathbb{E}}}
\newcommand{\PP}
{\Pr}
\newcommand{\ba}{\mathbf A}
\newcommand{\bsa}{\mathbf a}
\newcommand{\bx}{\mathbf X}
\newcommand{\bsx}{\mathbf x}
\newcommand{\bsy}{\mathbf y}
\newcommand{\bsz}{\mathbf z}
\newcommand{\bsr}{\mathbf r}
\title{Speeding Up Distributed Machine Learning\\Using Codes}
\author{
Kangwook Lee, Maximilian Lam, Ramtin Pedarsani, \\
Dimitris Papailiopoulos, and Kannan Ramchandran, \emph{Fellow, IEEE}
\thanks{Kangwook Lee is with the School of Electrical Engineering, KAIST. Maximilian Lam and Kannan Ramchandran are with the Department of Electrical Engineering and Computer Sciences, University of California, Berkeley. Ramtin Pedarsani is with the Department of Electrical and Computer Engineering, University of California, Santa Barbara. Dimitris Papailiopoulos is with the Department of Electrical and Computer Engineering, University of Wisconsin-Madison. 
Emails: kw1jjang@kaist.ac.kr, agnusmaximus@berkeley.edu, ramtin@ece.ucsb.edu, dimitris@papail.io, kannanr@eecs.berkeley.edu.

This work is published in IEEE Transactions on Information Theory~\cite{itjournal}, and was presented in part at the 2015 Neural Information Processing Systems (NIPS) Workshop on Machine Learning Systems~\cite{lee2015nips}, and the 2016 IEEE International Symposium on Information Theory (ISIT)~\cite{lee2016isit}.}}
\begin{document}
\maketitle
\begin{abstract}
Codes are widely used in many engineering applications to offer {\it robustness} against {\it noise}.
In large-scale systems there are several types of noise that can affect the performance of distributed machine learning algorithms -- straggler nodes, system failures, or communication bottlenecks -- but there has been little interaction cutting across codes, machine learning, and distributed systems.
In this work, we provide theoretical insights on how {\it coded} solutions can achieve significant gains compared to uncoded ones.
We focus on two of the most basic building blocks of distributed learning algorithms: {\it matrix multiplication} and {\it data shuffling}.
For matrix multiplication, we use codes to alleviate the effect of stragglers, and show that if the number of homogeneous workers is $n$, and the runtime of each subtask has an exponential tail, coded computation can speed up distributed matrix multiplication by a factor of $\log n$.
For data shuffling, we use codes to reduce communication bottlenecks, exploiting the excess in storage. 
We show that when a constant fraction $\alpha$ of the data matrix can be cached at each worker, and $n$ is the number of workers, \emph{coded shuffling} reduces the communication cost by a factor of $(\alpha + \frac{1}{n})\gamma(n)$ compared to uncoded shuffling, where $\gamma(n)$ is the ratio of the cost of unicasting $n$ messages to $n$ users to multicasting a common message (of the same size) to $n$ users.
For instance, $\gamma(n) \simeq n$ 
if multicasting a message to $n$ users is as cheap as unicasting a message to one user.
We also provide experiment results, corroborating our theoretical gains of the coded algorithms.
\end{abstract}
\section{Introduction}\label{sec:intro}

In recent years, the computational paradigm for large-scale machine learning and data analytics has shifted towards massively large distributed systems, comprising individually small and unreliable computational nodes (low-end, commodity hardware).  
Specifically, modern distributed systems like Apache Spark \cite{apache_spark} and computational primitives like MapReduce \cite{mapreduce} have gained significant traction, 
as they enable the execution of production-scale tasks on data sizes of the order of petabytes.
However, it is observed that the performance of a modern distributed system is significantly affected by anomalous system behavior and bottlenecks \cite{40801}, i.e., a form of ``system noise". 
Given the individually unpredictable nature of the nodes in these systems, we are faced with the challenge of securing fast and high-quality algorithmic results in the face of uncertainty.

\begin{figure}[t]
\centering
\includegraphics[width= 0.5\textwidth]{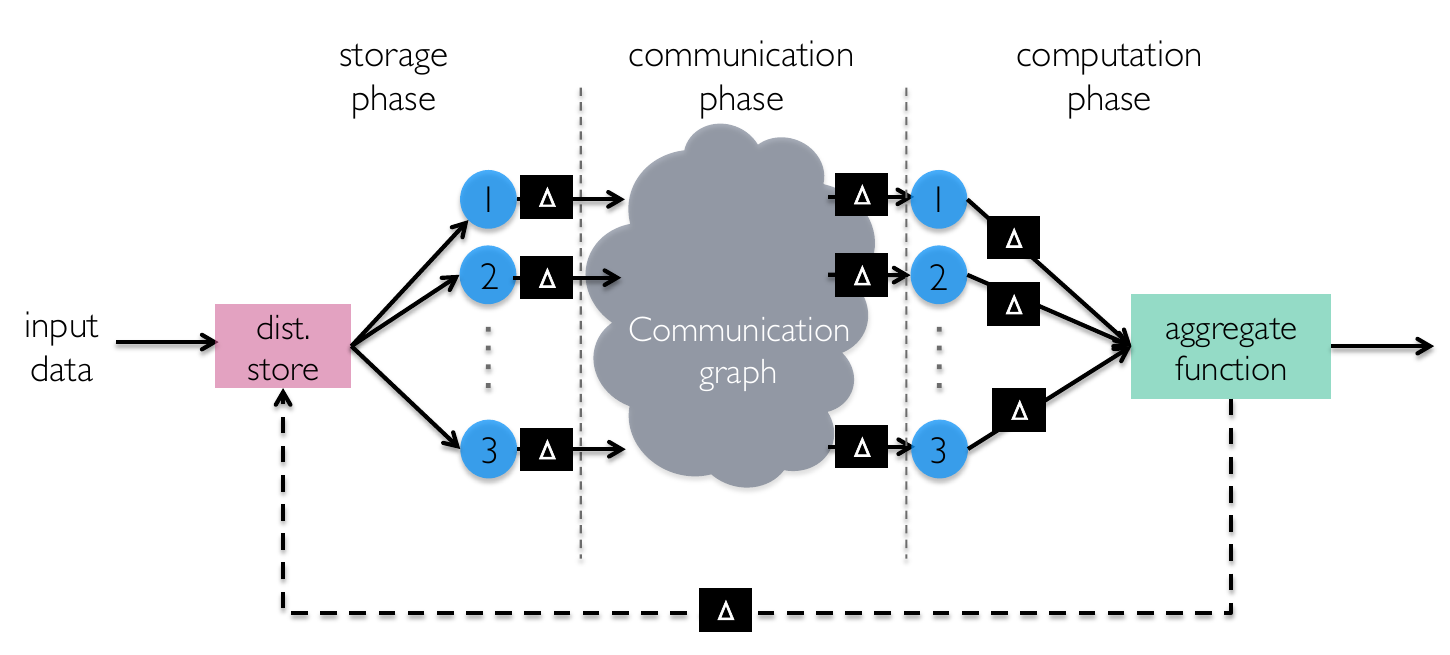}
\caption{\footnotesize{\textbf{Conceptual diagram of the phases of distributed computation.}
The algorithmic workflow of distributed (potentially iterative) tasks, can be seen as receiving input data, storing them in distributed nodes, communicating data around the distributed network, and then computing locally a function at each distributed node. The main bottlenecks in this execution (communication, stragglers, system failures) can all be abstracted away by incorporating a notion of delays between these phases, denoted by $\Delta$ boxes.
\label{fig:block_diagram}}}
\end{figure}
In this work, we tackle this challenge using \emph{coding theoretic} techniques. 
The role of codes in providing resiliency against noise has been studied for decades in several other engineering contexts, and is part of our everyday infrastructure (smartphones, laptops, WiFi and cellular systems, etc.).
The goal of our work is to apply coding techniques to blueprint robust distributed systems, especially for distributed machine learning algorithms. 
The workflow of distributed machine learning algorithms in a large-scale system can be decomposed into three functional phases: a storage, a communication, and a computation phase, as shown in Fig.~\ref{fig:block_diagram}.
In order to develop and deploy sophisticated solutions and tackle large-scale problems in machine learning, science, engineering, and commerce, it is important to understand and optimize novel and complex trade-offs across the multiple dimensions of computation, communication, storage, and the accuracy of results.   
Recently, codes have begun to transform the storage layer of distributed systems in modern data centers under the umbrella of regenerating and locally repairable codes for distributed storage
\cite{dimakis2010network, 
rashmi2011optimal, 
suh2011exact,  
tamo2011mds, 
cadambe2011optimal,  
papailiopoulos2011repair, 
gopalan2011locality, 
oggier2011self,
papailiopoulos2012simple,
han2007reliable,
huang2007pyramid,
papailiopoulos2012locally,
kamath2012codes,
rawat2012optimal,
prakash2012optimal,
silberstein2012error}
 which are also having a major impact on industry~\cite{huang2012erasure,sathiamoorthy2013xoring,rashmi2013hotstorage, rashmi2014hitchhiker}.

In this paper, we explore the use of coding theory to remove bottlenecks caused during the other phases: \emph{the communication and computation phases} of distributed algorithms. 
More specifically, we identify two core blocks relevant to the communication and computation phases that we believe are key primitives in a plethora of distributed data processing and machine learning algorithms: \emph{matrix multiplication} and \emph{data shuffling}.

%
%
%
%
%\begin{figure}[t]
%\centering
%\includegraphics[width= 0.5\textwidth]{figs/stragglers.pdf}
%\caption{\footnotesize{\textbf{The effects of slow nodes.} In distributed computation, the running time of a single distributed task is governed by that of the slowest node.
%In this toy figure, we see how slow nodes can significantly  impact the running time of distributed computation. Can we use coding to alleviate the straggler's effects?\label{fig:stragglers}}}
%\end{figure}
For matrix multiplication, we use codes to leverage the plethora of nodes and alleviate the effect of \emph{stragglers}, i.e., nodes that are significantly slower than average.
We show analytically that if there are $n$ workers having identically distributed computing time statistics that are exponentially distributed, the optimal \emph{coded matrix multiplication} is $\Theta(\log n)$\footnote{For any two sequences $f(n)$ and $g(n)$: $f(n) = \Omega(g(n))$ if there exists a positive constant $c$ such that $f(n)\geq cg(n)$; 
$f(n) = o(g(n))$ if $\lim_{n\rightarrow \infty} \frac{f(n)}{g(n)} =0$.} times faster than the uncoded matrix multiplication on average.

Data shuffling is a core element of many machine learning applications, and is well-known to improve the statistical performance of learning algorithms.
We show that codes can be used in a novel way to trade off excess in available storage for reduced communication cost for data shuffling done in parallel machine learning algorithms.
We show that when a constant fraction of the data matrix can be cached at each worker, and $n$ is the number of workers, \emph{coded shuffling} reduces the communication cost by a factor $\Theta(\gamma(n))$ compared to uncoded shuffling, where $\gamma(n)$ is the ratio of the cost of unicasting $n$ messages to $n$ users to multicasting a common message (of the same size) to $n$ users.
For instance, $\gamma(n) \simeq n$ if multicasting a message to $n$ users is as cheap as unicasting a message to one user.

We would like to remark that a major innovation of our coding solutions is that they are woven into the fabric of the algorithmic design, and coding/decoding is performed over the representation field of the input data (e.g., floats or doubles).
In sharp contrast to most coding applications, we do not need to ``re-factor code" and modify the distributed system to accommodate for our solutions; it is all done seamlessly in the algorithmic design layer, an abstraction that we believe is much more impactful as it is located ``higher up'' in the system layer hierarchy compared to traditional applications of coding that need to interact with the stored and transmitted ``bits" (e.g., as is the case for coding solutions for the physical or storage layer).
%Further, we provide experiment results, which show potential impacts of our coded solutions on the design of modern distributed systems in practice. 

\subsection{Overview of The Main Results} \label{sec:overview}
We now provide a brief overview of the main results of this paper. 
\begin{figure}[h]
\centering
\includegraphics[width=.48\textwidth]{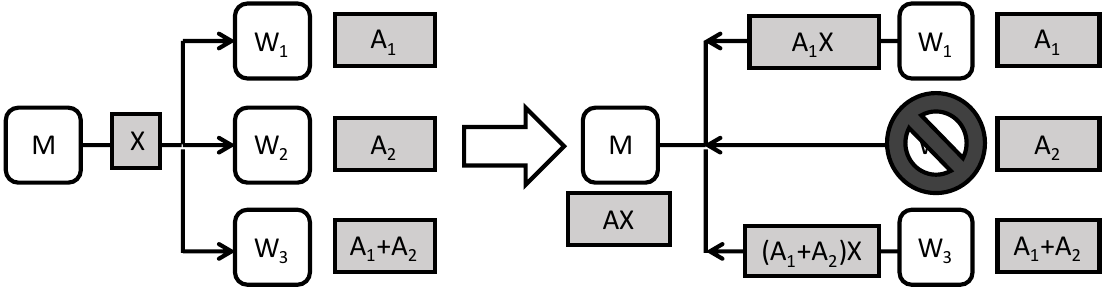}
\caption{\footnotesize{\textbf{Illustration of \emph{Coded Matrix Multiplication}.} Data matrix $\ba$ is partitioned into $2$ submatrices: $\ba_1$ and $\ba_2$. Node $W_1$ stores $\ba_1$, node $W_2$ stores $\ba_2$, and node $W_3$ stores $\ba_1 + \ba_2$. Upon receiving $\bx$, each node multiplies $\bx$ with the stored matrix, and sends the product to the master node. Observe that the master node can always recover $\ba \bx$ upon receiving \emph{any} $2$ products, without needing to wait for the slowest response. For instance, consider a case where the master node has received $\ba_1\bx$ and $(\ba_1 + \ba_2)\bx$. By subtracting $\ba_1 \bx$ from $(\ba_1 + \ba_2)\bx$, it can recover $\ba_2 \bx$ and hence $\ba\bx$.}}
\label{fig:mm_system}
\end{figure}
The following toy example illustrates the main idea of \emph{Coded Computation}.
Consider a system with three worker nodes and one master node, as depicted in Fig.~\ref{fig:mm_system}. 
The goal is to compute a matrix multiplication $\ba\bx$ for data matrix $\ba\in \mathbb{R}^{q \times r}$ and input matrix $\bx\in \mathbb{R}^{r \times s}$.
The data matrix $\ba$ is divided into two submatrices $\ba_1\in \mathbb{R}^{q/2 \times r}$ and $\ba_2\in \mathbb{R}^{q/2 \times r}$ and stored in node $1$ and node $2$, as shown in Fig.~\ref{fig:mm_system}.
The sum of the two submatrices is stored in node $3$.
After the master node transmits $\bx$ to the worker nodes, each node computes the matrix multiplication of the stored matrix and the received matrix $\bx$, and sends the computation result back to the master node.
The master node can compute $\ba\bx$ as soon as it receives \emph{any} two computation results.

\emph{Coded  Computation} designs parallel tasks for a linear operation using erasure codes such that its runtime is not affected by up to a certain number of stragglers. 
Matrix multiplication is one of the most basic linear operations and is the workhorse of a host of machine learning and data analytics algorithms, e.g., gradient descent based algorithm for regression problems, power-iteration like algorithms for spectral analysis and graph ranking applications, etc. 
Hence, we focus on the example of matrix multiplication in this paper.
With coded computation, we will show that the runtime of the algorithm can be significantly reduced compared to that of other uncoded algorithms. 
The main result on \emph{Coded Computation} is stated in the following (informal) theorem.

\begin{theorem}[Coded computation]
If the number of workers is $n$, and the runtime of each subtask has an exponential tail, the optimal coded matrix multiplication is $\Theta(\log n)$ times faster than the uncoded matrix multiplication.
\end{theorem}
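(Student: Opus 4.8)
The plan is to instantiate both schemes under the stated latency model, compute their expected completion times, and take the ratio. I would use the canonical model in which a worker assigned a subtask touching a fraction $\rho$ of the rows of $\ba$ has runtime $T$ with $\Pr(T>t)=e^{-\mu(t/\rho-1)}$ for $t\ge\rho$, i.e.\ $T=\rho\,(1+E/\mu)$ with $E$ standard exponential (the pure exponential-tail variant $\Pr(T>t)=e^{-\mu t/\rho}$ is handled the same way, and any tail that is exponential up to constant factors reduces to these by sandwiching). The \emph{uncoded} scheme cuts $\ba$ into $n$ row blocks, assigns one to each worker ($\rho=1/n$), and must wait for \emph{all} $n$, so its time is $T_{\mathrm{unc}}=\max_{i\le n}T_i$. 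For the achievability side of \emph{optimal coded} I would analyze the $(n,k)$ MDS scheme: split $\ba$ into $k$ blocks, form $n$ coded blocks of size $\rho=1/k$, one per worker, and decode from \emph{any} $k$ responses, so the time is the $k$-th order statistic $T_{(k)}$ of $n$ i.i.d.\ size-$1/k$ runtimes; uncoded is exactly the $k=n$ case, and ``optimal coded'' means $\min_k\EE[T_{(k)}]$ (to be upgraded below to an optimum over all schemes).

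First I would pin down the uncoded cost: writing $T_i=\tfrac1n(1+E_i/\mu)$, one gets $\EE[T_{\mathrm{unc}}]=\tfrac1n(1+\EE[\max_{i\le n}E_i]/\mu)=\tfrac1n(1+H_n/\mu)=\Theta(\log n/n)$, using that the maximum of $n$ i.i.d.\ standard exponentials has mean $H_n\sim\ln n$. Next the MDS cost: by the R\'enyi representation the $k$-th smallest of $n$ standard exponentials has mean $H_n-H_{n-k}$, so $\EE[T_{(k)}]=\tfrac1k\big(1+(H_n-H_{n-k})/\mu\big)$. Putting $k=\beta n$ and using $H_n-H_{n-k}=\ln\tfrac1{1-\beta}+o(1)$ gives $\EE[T_{(\beta n)}]=\tfrac1{\beta n}\big(1+\tfrac1\mu\ln\tfrac1{1-\beta}\big)(1+o(1))$. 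I would then minimize the explicit one-variable function $h(\beta)=\tfrac1\beta\big(1+\tfrac1\mu\ln\tfrac1{1-\beta}\big)$ on $(0,1)$: it diverges at both endpoints, so it has an interior minimizer $\beta^\star\in(0,1)$ (depending only on $\mu$), giving $\min_k\EE[T_{(k)}]=\Theta(1/n)$.

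Dividing the two estimates yields $\EE[T_{\mathrm{unc}}]/\min_k\EE[T_{(k)}]=\Theta(\log n)$, the claimed speedup against the best MDS rate. To make ``optimal coded'' mean optimal over \emph{all} distributed schemes, I would add a matching converse that every scheme needs $\Omega(1/n)$: in the shifted model a size-$\rho$ subtask cannot finish before time $\rho$, and any set of responses from which the master can decode $\ba\bx$ must carry information equivalent to all of $\ba$, so the fractions of those subtasks sum to at least $1$ and one of them is $\ge 1/n$, forcing completion time $\ge 1/n$ deterministically; in the pure exponential-tail model I would instead bound the expected fraction of $\ba$ completed across all workers by time $t$ by $O(\mu n t)$ (a size-$\rho$ task is done by time $t$ only if an $\mathrm{Exp}(\mu/\rho)$ variable is $\le t$), which again gives $\Omega(1/(\mu n))$. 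Since the MDS scheme above attains this, the optimal coded time is $\Theta(1/n)$ and the speedup over uncoded is $\Theta(\log n)$, in expectation — and, via standard tail bounds for $\max$ and for order statistics, with high probability as well.

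The step I expect to be the main obstacle is the optimization in the second paragraph \emph{together with} showing the optimal rate is a nontrivial constant: one must control $H_n-H_{n-k}$ uniformly over all $k\le n$ — especially near $k=n$, where the order statistic blows up like $\log n$ and a crude bound would wash out the $\Theta(\log n)$ gap — and verify that $h$ has a genuine interior minimum rather than the infimum being approached at the boundary. The converse in the pure-tail case is the other delicate point, since there one must rule out schemes that try to exploit the small-but-positive probability of anomalously fast workers; the volume (second-moment) estimate above is what controls this, but making it robust to arbitrary, possibly adaptive, task assignments needs a careful union/averaging argument.
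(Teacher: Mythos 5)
Your proposal follows essentially the same route as the paper: the same shifted-exponential model with the $1/\ell$ scaling, the same order-statistic computations giving $\EE[T_{\mathrm{unc}}]=\tfrac1n(1+H_n/\mu)=\Theta(\log n/n)$ and $\EE[T_{(k)}]=\tfrac1k(1+(H_n-H_{n-k})/\mu)$, and the same one-variable optimization over $k=\beta n$ yielding an interior minimizer and a $\Theta(1/n)$ optimal MDS runtime (the paper solves it in closed form via the Lambert $W$ function in Lemma~\ref{thm:optimal_mds}). Your added converse that \emph{every} scheme needs $\Omega(1/n)$ goes beyond what the paper proves — it only invokes $1/n$ as an ``optimistic lower bound'' without argument — but this is an embellishment rather than a different method.
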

For the formal version of the theorem and its proof, see Sec.~\ref{sec:shifted_exp}.

%\subsection{Coded Shuffling}
We now overview the main results on coded shuffling.
Consider a master-worker setup where a master node holds the entire data set.
The generic machine learning task that we wish to optimize is the following: 
1) the data set is randomly permuted and partitioned in batches at the master;
2) the master sends the batches to the workers;
3) each worker uses its batch and locally trains a model;
4) the local models are averaged at the master and the process is repeated. 
To reduce communication overheads between master and workers, {\it Coded Shuffling} exploits {\it i)} the locally cached data points of previous passes and {\it ii)} the ``transmission strategy" of the master node.

\begin{figure}[h]
\centering
\includegraphics[width=.48\textwidth]{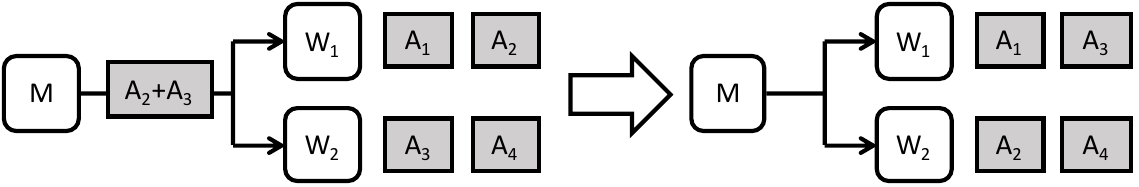}
\caption{\footnotesize{\textbf{Illustration of \emph{Coded Shuffling}.} Data matrix $\ba$ is partitioned into $4$ submatrices: $\ba_1$ to $\ba_4$. Before shuffling, worker $W_1$ has $\ba_1$ and $\ba_2$ and worker $W_2$ has $\ba_3$ and $\ba_4$. The master node can send $\ba_2 + \ba_3$ in order to shuffle the data stored at the two workers.} }
\label{fig:coded_shuffling_illustration}
\end{figure}

We illustrate the basics of \emph{Coded Shuffling} with a toy example.
Consider a system with two worker nodes and one master node.
Assume that the data set consists of 4 batches ${\bf A}_1,\ldots,{\bf A}_4$, which are stored across two workers as shown in Fig.~\ref{fig:coded_shuffling_illustration}.
The sole objective of the master is to transmit ${\bf A}_3$ to the first worker and ${\bf A}_4$ to the second.
For this purpose, the master node can simply multicast a \emph{coded} message $\ba_2 + \ba_3$ to the worker nodes since the workers can decode the desired batches using the stored batches.
Compared to the na\"{\i}ve (or uncoded) shuffling scheme in which the master node transmits $\ba_2$ and $\ba_3$ separately, this new shuffling scheme can save $50\%$ of the communication cost, speeding up the overall machine learning algorithm.
The {\it Coded Shuffling} algorithm is a generalization of the above toy example, which we explain in detail in Sec.~\ref{sec:shuffling}.

Note that the above example assumes that {\it multicasting} a message to all workers costs exactly the same as unicasting a message to one of the workers. 
In general, we capture the advantage of using multicasting over unicasting by defining  $\gamma(n)$ as follows:
\begin{align}
&\gamma(n) \nonumber\\
&\defeq \frac{\text{cost of unicasting}~n~\text{separate msgs to}~n~\text{workers}}{\text{cost of multicasting a common msg to}~n~\text{workers}}.
\end{align}
Clearly, $1 \leq \gamma(n) \leq n$: if $\gamma(n)=n$, the cost of multicasting is equal to that of unicasting a single message (as in the above example); if $\gamma(n)=1$, there is essentially no advantage of using multicast over unicast.

We now state the main result on \emph{Coded Shuffling} in the following (informal) theorem.
\begin{theorem}[Coded shuffling]
Let $\alpha$ be the fraction of the data matrix that can be cached at each worker, and $n$ be the number of workers. 
Assume that the advantage of multicasting over unicasting is $\gamma(n)$. 
Then, coded shuffling reduces the communication cost by a factor of $\left(\alpha + \frac{1}{n}\right) \gamma(n)$ compared to uncoded shuffling.
\end{theorem}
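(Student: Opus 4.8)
The plan is to reduce the theorem to an explicit comparison of the per-pass communication costs of the two schemes. Fix the model used in Section~\ref{sec:shuffling}: the data matrix $\ba$ has $N$ rows, there are $n$ workers each equipped with a cache holding $\alpha N$ rows, and in each pass a uniformly random permutation partitions the $N$ rows into $n$ batches of size $N/n$, one batch per worker; between consecutive passes the master must deliver to every worker exactly the rows of its new batch that are not already in its cache, after which the caches are refreshed. Write $c_u$ for the cost of unicasting one row to one worker and $c_b$ for the cost of broadcasting one row to all workers, so that $\gamma(n)=nc_u/c_b$. The goal is to show that the (expected) uncoded cost divided by the (expected) coded cost is $\Theta(\alpha\gamma(n))$.

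First I would bound the uncoded cost. Under any routing-only strategy the master can do no better than send each missing row directly to the unique worker that wants it: since the new assignment is a partition, no two workers want the same row in a given pass, so there are no multicast opportunities to exploit without coding. A second-moment/concentration argument over the random permutation — the cache of each worker is, up to lower-order terms, a uniform $\alpha$-fraction of the rows, so a fresh batch of $N/n$ rows overlaps it in $\approx \alpha N/n$ rows — shows that each worker is missing $\Theta((1-\alpha)N/n)$ rows, hence the uncoded cost is $\Theta((1-\alpha)Nc_u)$. (One should either assume $\alpha$ bounded away from $1$ or simply carry the $(1-\alpha)$ factor through; it cancels in the final ratio.)

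Next I would analyze the coded scheme by adapting the coded-caching construction of Maddah-Ali and Niesen~\cite{MN1} to shuffling. Split every row into subpackets indexed by the size-$t$ subsets of the $n$ workers with $t=\alpha n$; arrange the cache so that, after a pass, in addition to its own batch each worker stores the appropriate subpackets of the other batches; in the delivery phase the master broadcasts XORs of $t+1$ subpackets each, chosen so that within each XOR there is, for every one of the $t+1$ participating workers, a subpacket that is new to that worker and already cached by the other $t$. One broadcast then delivers useful content to $\Theta(\alpha n)$ workers at once, so the total number of broadcasts is $\Theta\bigl(\tfrac{(1-\alpha)N}{\alpha n}\bigr)$ and the coded cost is $\Theta\bigl(\tfrac{(1-\alpha)N}{\alpha n}\,c_b\bigr)$. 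The delicate points — and the main obstacle — are: (i) showing that this coded placement can be maintained consistently from pass to pass even though the role of every batch changes each iteration (this is where the argument departs from vanilla coded caching, since the ``files'' are permuted every round while the cache budget stays fixed at $\alpha N$ and each worker must also retain its own current batch); (ii) handling subpacketization and rounding, which forces mild regularity assumptions such as $\alpha n\in\mathbb{Z}$ and $\alpha\ge 1/n$; and (iii) verifying that the XOR structure delivers the correct rows for an arbitrary target permutation, which I would do by symmetrizing over permutations or via a worst-case combinatorial count.

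Finally, dividing the two costs the $(1-\alpha)N$ factors cancel and the ratio is $\Theta\bigl(\tfrac{c_u}{c_b}\cdot\alpha n\bigr)=\Theta(\alpha\gamma(n))$, proving the claim. To obtain a genuine $\Theta$ rather than a one-sided bound I would also record a matching converse — that no caching-based scheme of this type can beat the coded-caching gain $\Theta(\alpha n)$ by more than a constant factor — so that the stated improvement is order-tight.
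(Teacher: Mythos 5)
Your cost accounting is the right skeleton and matches the paper's: the uncoded scheme pays $\Theta\bigl((1-\alpha)q\bigr)$ unicasts per pass, the coded scheme pays $\Theta\bigl(\tfrac{(1-\alpha)q}{\alpha n}\bigr)$ broadcasts (Theorem~\ref{thm1} and Corollary~\ref{corollary:shuffling_scale}), and folding in $\gamma(n)$ as in Remark~\ref{remark_tree} gives the ratio $\Theta(\alpha\gamma(n))$. The gap is your item (i), which you correctly identify as the main obstacle but then leave unresolved --- and it is precisely where the paper's proof does its actual work. A centralized Maddah-Ali--Niesen placement with subpackets indexed by size-$t$ subsets cannot simply be carried from pass to pass here: each worker must hold its \emph{entire} fresh batch of $q/n$ rows (needed for local computation), the batches are re-drawn from a fresh uniform permutation every iteration, and the cache budget is fixed, so the symmetric deterministic placement is destroyed after one shuffle. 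The paper sidesteps this by abandoning the centralized construction altogether: it uses a randomized cache-update rule (retain the new batch $S_i^{t+1}$ plus a uniform random subset of the old cache), shows this makes the cache distribution \emph{stationary} across iterations ($C_i^t\setminus S_i^t$ uniform without replacement in $[q]\setminus S_i^t$), and then computes the expected size of each exclusive set exactly, $\PP(k\in\tilde{C}^t_{\mathcal{I}})=\frac{|\mathcal{I}|}{n}p^{|\mathcal{I}|-1}(1-p)^{n-|\mathcal{I}|}$ with $p=\frac{s-q/n}{q-q/n}$, sums over all subsets $\mathcal{I}$ with $|\mathcal{I}|\ge 2$, and simplifies via a binomial identity to the closed form \eqref{eq:r2}. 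This is a decentralized-coded-caching-style argument, and without it (or an equivalent mechanism for re-establishing your placement after every permutation) your coded-rate bound is unsupported. Note also that the effective caching parameter is $p\approx\frac{s-q/n}{q}$ rather than $\alpha=s/q$, because $q/n$ of each cache is reserved for the current batch; your choice $t=\alpha n$ would need this correction, though it does not change the order in the regime $\alpha$ bounded away from $1/n$.

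Two smaller points. The ``matching converse'' you propose at the end is not needed: the theorem's $\Theta(\cdot)$ compares two explicitly specified schemes whose rates are both computed exactly, and the paper expressly lists an information-theoretic lower bound for coded shuffling as an open problem, so you should not promise one. Likewise your claim that no routing-only strategy beats per-row unicast is unnecessary; the uncoded baseline is simply the na\"{\i}ve scheme, whose rate $R_u=q(1-s/q)$ is computed directly in \eqref{eq:r1}.
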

For the formal version of the theorem and its proofs, see Sec.~\ref{sec:shuffle_main_results}.

The remainder of this paper is organized as follows. In Sec.~\ref{sec:related}, we provide an extensive review of the related works in the literature. Sec.~\ref{sec:computation} introduces the coded matrix multiplication, and Sec.~\ref{sec:shuffling} introduces the coded shuffling algorithm.
Finally, Sec.~\ref{sec:conclusion} presents conclusions and discusses open problems.

\section{Related Work}\label{sec:related}
\subsection{Coded Computation and Straggler Mitigation}
The straggler problem has been widely observed in distributed computing clusters. 
The authors of \cite{40801} show that running a computational task at a computing node often involves unpredictable latency due to several factors such as network latency, shared resources, maintenance activities, and power limits. Further, they argue that stragglers cannot be completely removed from a distributed computing cluster.
The authors of \cite{Mantri} characterize the impact and causes of stragglers that arise due to resource contention, disk failures, varying network conditions, and imbalanced workload. 

One approach to mitigate the adverse effect of stragglers is based on efficient straggler detection algorithms. 
For instance, the default scheduler of Hadoop constantly detects stragglers while running computational tasks. Whenever it detects a straggler, it relaunches the task that was running on the detected straggler at some other available node.  
In \cite{zaharia_late}, Zaharia et al. propose a modification to the existing straggler detection algorithm and show that the proposed solution can effectively reduce the completion time of MapReduce tasks.
In \cite{Mantri}, Ananthanarayanan et al. propose a system that efficiently detects stragglers using real-time progress and cancels those stragglers, and show that the proposed system can further reduce the runtime of MapReduce tasks.

Another line of work is based on breaking the synchronization barriers in distributed algorithms\,\cite{agarwal2011distributed, HogWild!}.
An asynchronous parallel execution can continuously make progress without having to wait for all the responses from the workers, and hence the overall runtime is less affected by stragglers.
However, these asynchronous approaches break the serial consistency of the algorithm to be parallelized, and do not guarantee ``correctness'' of the end result, i.e., the output of the asynchronous algorithm can differ from that of a serial execution with an identical number of iterations.

Recently, replication-based approaches have been explored to tackle the straggler problem: by replicating tasks and scheduling the replicas, the runtime of distributed algorithms can be significantly improved \cite{ananthanarayanan2013effective, 6736597, wang2014efficient, mor_exact_analysis, chaubeyreplicated, rr_w_cancellation_cost, joshitompecs}. 
By collecting outputs of the fast-responding nodes (and potentially canceling all the other slow-responding replicas), such replication-based scheduling algorithms can reduce latency. 
In \cite{rr_w_cancellation_cost}, the authors show that even without replica cancellation, one can still reduce the average task latency by properly scheduling redundant requests. 
We view these policies as special instances of coded computation: such task replication schemes can be seen as \emph{repetition-coded} computation. In Sec.~\ref{sec:computation}, we describe this connection in detail, and indicate that coded computation can significantly outperform replication (as is usually the case for coding vs. replication in other engineering applications).

Another line of work that is closely related to coded computation is about the latency analysis of coded distributed storage systems. 
In \cite{huang2012codes, mdsqueuejournal}, the authors show that the flexibility of erasure-coded distributed storage systems allows for faster data retrieval performance than replication-based distributed storage systems. 
Joshi et al.\,\cite{joshi2014} show that scheduling redundant requests to an increased number of storage nodes can improve the latency performance, and characterize the resulting storage-latency tradeoff. 
Sun et al.\,\cite{sun2015provably} study the problem of adaptive redundant requests scheduling, and characterize the optimal strategies for various scenarios.
In \cite{kadheallerton, kadheisit}, Kadhe and Soljanin analyze the latency performance of \emph{availability codes}, a class of storage codes designed for enhanced availability. 
In \cite{joshitompecs}, the authors study the cost associated with scheduling of redundant requests, and propose a general scheduling policy that achieves a delicate balance between the latency performance and the cost.

We now review some recent works on coded computation, which have been published after our conference publications\,\cite{lee2015nips, lee2016isit}.
In~\cite{nuwan2016anytime}, an anytime coding scheme for approximate matrix multiplication is proposed, and it is shown that the proposed scheme can improve the quality of approximation compared with the other existing coded schemes for exact computation. 
In \cite{dutta2016short}, the authors propose a coded computation scheme called `Short-Dot'.
Short-Dot induces additional sparsity to the encoded matrices at the cost of reduced decoding flexibility, and hence potentially speeds up the computation. 
The authors of\,\cite{tandon2016gradient} consider the problem of computing gradients in a distributed system, and propose a novel coded computation scheme tailored for computing a sum of functions.
In many machine learning problems, the objective function is a sum of per-data loss functions, and hence the gradient of the objective function is the sum of gradients of per-data loss functions.
Based on this observation, they propose \emph{Gradient Coding}, which can reliably compute the exact gradient of any function in the presence of stragglers.
While Gradient coding can be applied to computing gradients of any functions, it usually incurs significant storage and computation overheads. 
In\,\cite{bitar2017minimizing}, the authors consider a secure coded computation problem where the input data matrices need to be secured from the workers.
They propose a secure computation scheme based on Staircase codes, which can speed up the distributed computation while securing the input data from the workers.  
In \cite{lee2017isitblockcode}, the authors consider the problem of large matrix-matrix multiplication, and propose a new coded computation scheme based on product codes. 
In \cite{codedcomputationhet}, the authors consider the coded computation problem on heterogenous computing clusters while our work assumes a homogeneous computing cluster.
The authors show that by delicately distributing jobs across heterogenous workers, one can improve the performance of coded computation compared with the symmetric job allocation scheme, which is designed for homogeneous workers in our work. 
While most of the works focus on the application of coded computation to \emph{linear} operations, a recent work shows that coding can be used also in distributed computing frameworks involving \emph{nonlinear} operations\,\cite{lee2017isitmulticore}.
The authors of \cite{lee2017isitmulticore} show that by leveraging the multi-core architecture in the worker computing units and ``coding across'' the multi-core computed outputs, significant (and in some settings unbounded) gains in speed-up in computational time can be achieved between the coded and uncoded schemes.

\subsection{Data Shuffling and Communication Overheads}
Distributed learning algorithms  on large-scale networked systems have been extensively studied in the literature 
\cite{bertsekasbook, nedic2009distributed, boyd2011distributed, bekkerman2011scaling, duchi2012dual, chen2012diffusion, dean2012large, low2012distributed, kraska2013mlbase, sparks2013mli, li2014scaling}.
Many of the distributed algorithms that are implemented in practice share a similar algorithmic ``anatomy": 
the data set is split among several cores or nodes, each node trains a model locally, then the local models are averaged, and the process is repeated. 
While training a model with parallel or distributed learning algorithms, it is common to randomly re-shuffle the data a number of times \cite{JellyFish, HogWild!, bottou2012stochastic, DimmWitted,gurbuzbalaban2015random, ioffe2015batch}.
This essentially means that after each shuffling the learning algorithm will go over the data in a different order than before.
Although the effects of random shuffling are far from understood theoretically, the large statistical gains have turned it into a common practice.
Intuitively, data shuffling before a new pass over the data, implies that nodes get a nearly ``fresh" sample from the data set, which experimentally leads to better statistical performance.
Moreover, bad orderings of the data---known to lead to slow convergence in the worst case \cite{JellyFish,gurbuzbalaban2015random,ioffe2015batch}---are ``averaged out".
However, the statistical benefits of data shuffling do not come for free: each time a new shuffle is performed, the {\it entire} dataset is communicated over the network of nodes.
This inevitably leads to performance bottlenecks due to heavy communication.

In this work, we propose to use coding opportunities to significantly reduce the communication cost of some distributed learning algorithms that require data shuffling.  
Our coded shuffling algorithm is built upon the coded caching scheme by Maddah-Ali and Niesen \cite{MN1}. Coded caching is a technique to reduce the communication rate in content delivery networks. Mainly motivated by video sharing applications, coded caching exploits the multicasting opportunities between users that request different video files to significantly reduce
the communication burden of the server node that has access to the files. Coded caching has been studied in many
scenarios such as decentralized coded caching \cite{MN2}, online coded caching \cite{Online}, hierarchical coded caching for wireless communication \cite{Hierarchical}, and
device-to-device coded caching \cite{D2D}. Recently, the authors in~\cite{cmr} proposed coded MapReduce that reduces the communication cost in the process of transferring the results of mappers to reducers.

Our proposed approach is significantly different from all related studies on coded caching in two ways:
(i) we shuffle the \emph{data points} among the computing nodes to \emph{increase the statistical efficiency} of distributed computation and machine learning algorithms; 
and 
(ii) we \emph{code the data over their actual representation} (i.e., over the doubles or floats) unlike the traditional coding schemes over bits. In Sec.~\ref{sec:shuffling}, we describe how coded shuffling can remarkably speed up the communication phase of large-scale parallel machine learning algorithms, and provide extensive numerical experiments to validate our results.

The coded shuffling problem that we study is related to the index coding problem\,\cite{indexbirk, indexbaryossef}. Indeed, given a fixed ``side information'' reflecting the memory content of the nodes, the data delivery strategy for a particular permutation of the data rows induces an index coding problem. However, our coded shuffling framework is different from index coding in at least two significant ways.  First, the coded shuffling framework involves multiple iterations of data being stored across all the nodes.  Secondly, when the caches of the nodes are updated in coded shuffling, the system is unaware of the upcoming permutations.  Thus, the cache update rules need to be designed to target any possible unknown permutation of data in succeeding iterations of the algorithm.

We now review some recent works on coded shuffling, which have been published after our first presentation\,\cite{lee2015nips, lee2016isit}.
In \cite{attiaglobecom16}, the authors study the information-theoretic limits of the coded shuffling problem. 
More specifically, the authors completely characterize the fundamental limits for the case of $2$ workers and the case of $3$ workers.
In \cite{attiaallertonworst16}, the authors consider the worse-case formulation of the coded shuffling problem, and propose a two-stage shuffling algorithm.  
The authors of \cite{song2017pliable} propose a new coded shuffling scheme based on pliable index coding.
While most of the existing works focus on either coded computation or coded shuffling, one notable exception is\,\cite{songzeli2016shuffling}. 
In this work, the authors generalize the original coded MapReduce framework by introducing stragglers to the computation phases.
Observing that highly flexible codes are not favorable to coded shuffling while replication codes allow for efficient shuffling, the authors propose an efficient way of coding to mitigate straggler effects as well as reduce the shuffling overheads.

\section{Coded Computation}
\label{sec:computation}
In this section, we propose a novel paradigm to mitigate the straggler problem. The core idea is simple: \emph{we introduce redundancy into subtasks of a distributed algorithm such that the original task's result can be decoded from a subset of the subtask results, treating uncompleted subtasks as \textbf{erasures}}. 
For this specific purpose, we use \emph{erasure codes} to design \emph{coded} subtasks.

An erasure code is a method of introducing redundancy to information for robustness to noise\,\cite{cover2012elements}.
It encodes a message of $k$ symbols into a longer message of $n$ coded symbols such that the original $k$ message symbols can be recovered by decoding a subset of coded symbols\,\cite{costello2004error, cover2012elements}. 
We now show how erasure codes can be applied to distributed computation to mitigate the straggler problem.

\subsection{Coded Computation} \label{sec:coded_compute_def}
A coded distributed algorithm is specified by local functions, local data blocks, decodable sets of indices, and a decoding function:
The local functions and data blocks specify the way the original computational task and the input data are distributed across $n$ workers;
and the decodable sets of indices and the decoding function are such that the desired computation result can be correctly recovered using the decoding function as long as the local computation results from any of the decodable sets are collected.

The formal definition of coded distributed algorithms is as follows.
\begin{definition}[Coded computation]
Consider a computational task $f_\ba(\cdot)$.
A \emph{coded} distributed algorithm for computing $f_\ba(\cdot)$ is specified by 
\begin{itemize}
\item local functions $\langle f^i_{\ba_i}(\cdot) \rangle_{i=1}^{n}$ and local data blocks $\langle \ba_i\rangle_{i=1}^{n}$;
\item (minimal) decodable sets of indices $\mathcal{I} \subset \mathcal{P}([n])$ and a decoding function $\texttt{dec}(\cdot, \cdot)$,
\end{itemize}
where $[n] \defeq \{1,2,\ldots,n\}$, and $\mathcal{P}(\cdot)$ is the power set of a set.
The decodable sets of indices $\mathcal{I}$ is minimal: no element of $\mathcal{I}$ is a subset of other elements.
The decoding function takes a sequence of indices and a sequence of subtask results, and it must correctly output $f_\ba(\bsx)$ if any decodable set of indices and its corresponding 
results are given.
\end{definition}

A coded distributed algorithm can be run in a distributed computing cluster as follows. 
Assume that the \thth{i} (encoded) data block $\ba_i$ is stored at the \thth{i} worker for all $i$.  
Upon receiving the input argument $\bsx$, the master node multicasts $\bsx$ to all the workers, and then waits until it receives the responses from any of the decodable sets. 
Each worker node starts computing its local function when it receives its local input argument, and sends the task result to the master node. 
Once the master node receives the results from some decodable set, it decodes the received task results and obtains $f_\ba(\bsx)$.

%
%\begin{algorithm}[h]
%\begin{algorithmic}
%\On Receiving an input argument $\bsx$
%\State Multicast $\bsx$ to all the workers.
%\State $\mathbf{i} = \langle \rangle$
%\State $\bsy_{list} = \langle \rangle$
%\While {$\mathbf{i} \notin \mathcal{I}$}
%\On Receiving a message $\bsy$ from worker $j$
%\State $\mathbf{i} \gets \langle \mathbf{i}, j\rangle$
%\State $\bsy_{list} \gets \langle \bsy_{list}, \bsy \rangle$
%\EndWhile
%\State $\bsy \gets \texttt{dec}(\mathbf{i}, \bsy_{list})$
%\State Return $\bsy$
%\end{algorithmic}
%\caption{Coded computation: master node's protocol}
%\label{alg:protocol2}
%\end{algorithm}
%
%\begin{algorithm}[h]
%\begin{algorithmic}
%\On Receiving an input argument $\bsx$
%\State Compute $\bsy_i = f^i_{\ba_i}(\bsx)$
%\State Send $\bsy_i$ to the master node
%\end{algorithmic}
%\caption{Coded computation: worker node $i$'s protocol}
%\label{alg:protocol3}
%\end{algorithm}

The algorithm described in Sec.~\ref{sec:overview} is an example of coded distributed algorithms: it is a coded distributed algorithm for matrix multiplication that uses an $(n,n-1)$ MDS code. 
One can generalize the described algorithm using an $(n,k)$ MDS code as follows.
For any $1\leq k \leq n$, the data matrix $\ba$ is first divided into $k$ equal-sized submatrices\footnote{If the number of rows of $\ba$ is not a multiple of $k$, one can append zero rows to $\ba$ to make the number of rows a multiple of $k$.}. Then, by applying an $(n,k)$ MDS code to each element of the submatrices, $n$ encoded submatrices are obtained. 
We denote these $n$ encoded submatrices by $\ba'_1, \ba'_2, \ldots, \ba'_n$. 
Note that the $\ba'_i = \ba_i$ for $1\leq i\leq k$ if a systematic MDS code is used for the encoding procedure.
Upon receiving \emph{any} $k$ task results, the master node can use the decoding algorithm to decode $k$ task results. Then, one can find $\ba\bx$ simply by concatenating them.

\subsection{Runtime of Uncoded/Coded Distributed Algorithms}
In this section, we analyze the runtime of uncoded and coded distributed algorithms.
We first consider the overall runtime of an uncoded distributed algorithm, $T^\text{uncoded}_\text{overall}$.
Assume that the runtime of each task is identically distributed and independent of others.
We denote the runtime of the \thth{i} worker under a computation scheme, say s, by $T^\text{s}_i$.
Note that the distributions of $T_i$'s can differ across different computation schemes.
\begin{align} \label{eq:uncoded}
T^\text{uncoded}_\text{overall} =T^\text{uncoded}_{(n)} \defeq \max\{ T^\text{uncoded}_1, \ldots, T^\text{uncoded}_n \},
\end{align}
where $T_{(i)}$ is the \thth{i} smallest one in $\{T_i\}_{i=1}^{n}$.
From \eqref{eq:uncoded}, it is clear that a single straggler can slow down the overall algorithm.  
A \emph{coded} distributed algorithm is terminated whenever the master node receives results from any decodable set of workers. Thus, the overall runtime of a coded algorithm is \emph{not} determined by the slowest worker, but by the first time to collect results from some decodable set in $\mathcal{I}$, i.e.,
\begin{align} \label{eq:coded}
T^\text{coded}_\text{overall} =T^\text{coded}_{(\mathcal{I})} \defeq \min_{\mathbf{i} \in \mathcal{I}}{ \max_{j \in \mathbf{i}}{ T^\text{coded}_j }}
\end{align}
We remark that the runtime of uncoded distributed algorithms \eqref{eq:uncoded} is a special case of \eqref{eq:coded} with $\mathcal{I} = \{[n]\}$. In the following examples, we consider the runtime of the repetition-coded algorithms and the MDS-coded algorithms. 

\begin{example}[Repetition codes]
Consider an $\frac{n}{k}$-repetition-code where each local task is replicated $\frac{n}{k}$ times. 
We assume that each group of $\frac{n}{k}$ consecutive workers work on the replicas of one local task.
Thus, the decodable sets of indices $\mathcal{I}$ are all the minimal sets that have $k$ distinct task results, i.e., $\mathcal{I} = \{1,2, \ldots, \frac{n}{k}\} \times \{\frac{n}{k}+1, \frac{n}{k}+2, \ldots, \frac{n}{k}+k\} \times \ldots \times \{n-\frac{n}{k}+1, n-\frac{n}{k}+2, \ldots, n\}$, where $A \times B$ denotes the Cartesian product of matrix $A$ and $B$. Thus, 
\begin{align} \label{eq:rep_coded}
T^\text{Repetition-coded}_\text{overall} =\max_{i \in [k]} { \min_{j \in [\frac{n}{k}]} \{ T^\text{Repetition-coded}_{(i-1)\frac{n}{k}+j}\}  }.
\end{align}
\end{example}

\begin{example}[MDS codes]
If one uses an $(n,k)$ MDS code, the decodable sets of indices are the sets of \emph{any} $k$ indices, i.e., $\mathcal{I} = \{ \mathbf{i} | \mathbf{i} \subset [n],~|\mathbf{i}| = k \}$. Thus,
\begin{align} \label{eq:mds_coded}
T^\text{MDS-coded}_\text{overall} =T^\text{MDS-coded}_{(k)}
\end{align}
That is, the algorithm's runtime will be determined by the \thth{k} response, not by the \thth{n} response.
\end{example}

\subsection{Probabilistic Model of Runtime}
In this section, we analyze the runtime of uncoded/coded distributed algorithms assuming that task runtimes, including times to communicate inputs and outputs, are randomly distributed according to a certain distribution.
For analytical purposes, we make a few assumptions as follows. 
We first assume the existence of the \emph{mother runtime distribution} $F(t)$: we assume that running an algorithm using a \emph{single} machine takes a random amount of time $T_0$, that is a positive-valued, continuous random variable parallelized according to $F$, i.e. $\PP(T_0 \leq t) = F(t)$. 
We also assume that $T_0$ has a probability density function $f(t)$.
Then, when the algorithm is distributed into a certain number of subtasks, say $\ell$, the runtime distribution of each of the $\ell$ subtasks is assumed to be a scaled distribution of the mother distribution, i.e., $\PP(T_i \leq t) = F(\ell t)$ for $1 \leq i \leq \ell$. 
Note that we are implicitly assuming a \emph{symmetric} job allocation scheme, which is the optimal job allocation scheme if the underlying workers have the identical computing capabilities, i.e., homogeneous computing nodes are assumed.
Finally, the computing times of the $k$ tasks are assumed to be independent of one another.

\begin{remark}[Homogeneous Clusters and Heterogenous Clusters]
In this work, we assume homogeneous clusters: that is, all the workers have independent and identically distributed computing time statistics.
While our symmetric job allocation is optimal for homogeneous cases, it can be strictly suboptimal for heterogenous cases. 
While our work focuses on homogeneous clusters, we refer the interested reader to a recent work~\cite{codedcomputationhet} for a generalization of our problem setting to that of heterogeneous clusters, for which symmetric allocation strategies are no longer optimal.
\end{remark}

We first consider an uncoded distributed algorithm with $n$ (uncoded) subtasks. 
Due to the assumptions mentioned above, the runtime of each subtask is $F(nt)$. 
Thus, the runtime distribution of an uncoded distributed algorithm, denoted by $F_\text{overall}^{\text{uncoded}}(t)$, is simply $\left[F(nt) \right]^n$. 

When repetition codes or MDS codes are used, an algorithm is first divided into $k~(<n)$ systematic subtasks, and then $n-k$ coded tasks are designed to provide an appropriate level of redundancy. 
Thus, the runtime of each task is distributed according to $F(kt)$. 
Using \eqref{eq:rep_coded} and \eqref{eq:mds_coded}, 
one can easily find the runtime distribution of an $\frac{n}{k}$-repetition-coded distributed algorithm,  $F_\text{overall}^{\text{Repetition}}$, and the runtime distribution of an $(n,k)$-MDS-coded distributed algorithm, $F_\text{overall}^{\text{MDS-coded}}$. 
For an $\frac{n}{k}$-repetition-coded distributed algorithm, one can first find the distribution of $$\min_{j \in [\frac{n}{k}]} \{ T^\text{Repetition-coded}_{(i-1)\frac{n}{k}+j}\},$$ and then find the distribution of the maximum of $k$ such terms: 
\begin{align} \label{eq:F_rep}
F_\text{overall}^{\text{Repetition}}(t) = \left[1 - \left[1 - F(kt)\right]^{\frac{n}{k}}\right]^{k}.
\end{align}
The runtime distribution of an $(n,k)$-MDS-coded distributed algorithm is simply the \thth{k} order statistic: 
\begin{align} \label{eq:F_mds}
&F_\text{overall}^{\text{MDS-coded}}(t) \nonumber\\
&= \int_{\tau=0}^{t}{ nk f(k\tau) {n-1 \choose k-1} F(k\tau)^{k-1} \left[1-F(k\tau)\right]^{n-k} d\tau }.
\end{align}

\begin{remark}
For the same values of $n$ and $k$, the runtime distribution of a repetition-coded distributed algorithm strictly dominates that of an MDS-coded distributed algorithm. This can be shown by observing that the decodable sets of the MDS-coded algorithm contain those of the repetition-coded algorithm. 
\end{remark}

\begin{figure}[t]
\centering
\begin{subfigure}{0.45\textwidth}
\centering   
\includegraphics[width=\textwidth]{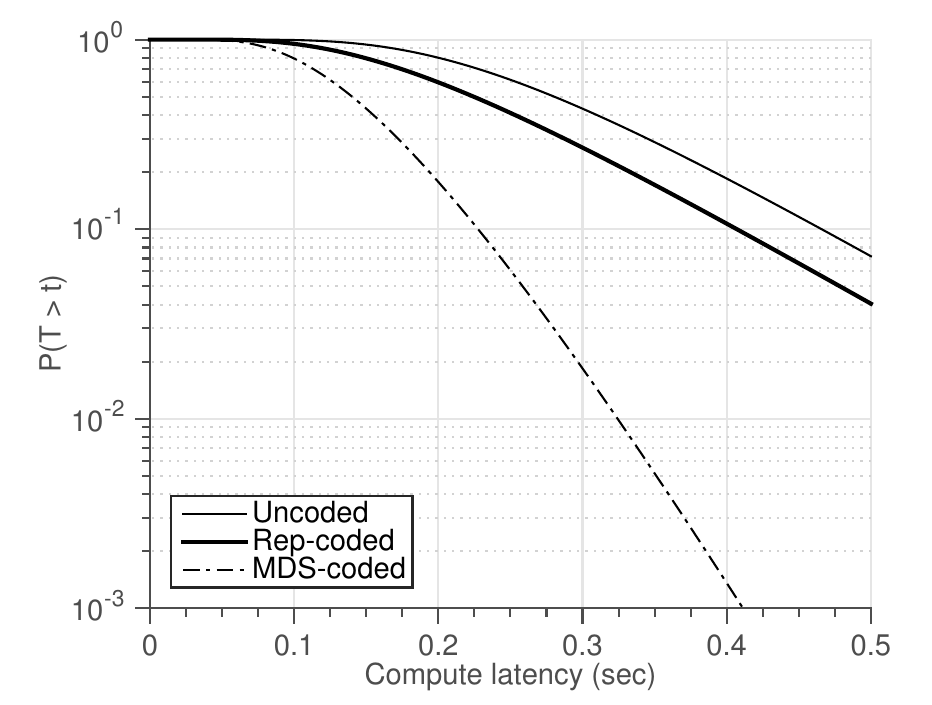}
\caption{\footnotesize{Shifted-exponential distribution}}
\label{fig:runtime_distribution_a}
\end{subfigure}
~~
\begin{subfigure}{0.45\textwidth}
\centering   
\includegraphics[width=\textwidth]{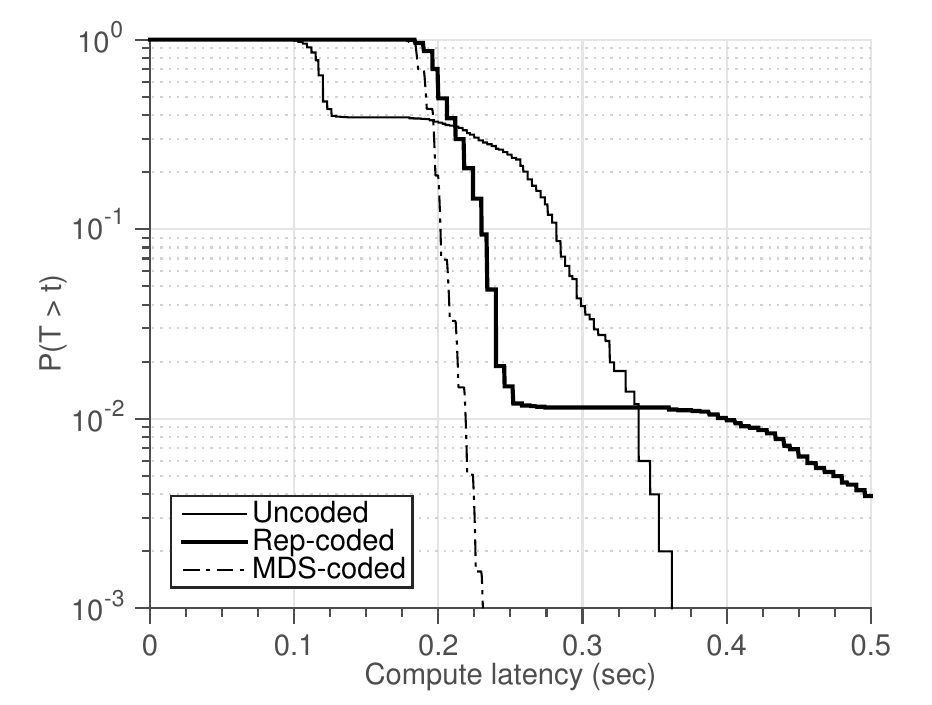}
\caption{\footnotesize{Empirical distribution}}
\label{fig:runtime_distribution_b}
\end{subfigure}
\caption{\footnotesize{\textbf{Runtime distributions of uncoded/coded distributed algorithms.} We plot the runtime distributions of uncoded/coded distributed algorithms. 
For the uncoded algorithms, we use $n=10$, and for the coded algorithms, we use $n=10$ and $k=5$.
In (a), we plot the runtime distribution when the runtime of tasks are distributed according to the shifted-exponential distribution. 
Indeed, the curves in (a) are analytically obtainable: See Sec.~\ref{sec:shifted_exp} for more details.
In (b), we use the empirical task runtime distribution measured on an Amazon EC2 cluster.
}}
\label{fig:runtime_distribution}
\end{figure}

In Fig.~\ref{fig:runtime_distribution}, we compare the runtime distributions of uncoded and coded distributed algorithms. We compare the runtime distributions of uncoded algorithm, repetition-coded algorithm, and MDS-coded algorithm with $n=10$ and $k=5$. In Fig.~\ref{fig:runtime_distribution_a}, we use a shifted-exponential distribution as the mother runtime distribution. That is, $F(t) = 1-e^{t-1}$ for $t\geq 1$. 
In Fig.~\ref{fig:runtime_distribution_b}, we use the empirical task runtime distribution that is measured on an Amazon EC2 cluster\footnote{The detailed description of the experiments is provided in Sec.~\ref{sec:simulation}.}. Observe that for both cases, the runtime distribution of the MDS-coded distribution has the lightest tail.

\subsection{Optimal Code Design for Coded Distributed Algorithms: The Shifted-exponential Case} \label{sec:shifted_exp}
When a coded distributed algorithm is used, the original task is divided into a fewer number of tasks compared to the case of uncoded algorithms. Thus, the runtime of each task of a coded algorithm, which is $F(kt)$, is stochastically larger than that of an uncoded algorithm, which is $F(nt)$.
If the value that we choose for $k$ is too small, then the runtime of each task becomes so large that the overall runtime of the distributed coded algorithm will eventually increase. 
If $k$ is too large, the level of redundancy may not be sufficient to prevent the algorithm from being delayed by the stragglers. 

Given the mother runtime distribution and the code parameters, one can compute the overall runtime distribution of the coded distributed algorithm using \eqref{eq:F_rep} and \eqref{eq:F_mds}. 
Then, one can optimize the design based on various target metrics, e.g., the expected overall runtime, the \thth{99} percentile runtime, etc. 

In this section, we show how one can design an optimal coded algorithm that minimizes \emph{the expected overall runtime} for a shifted-exponential mother distribution.
The shifted-exponential distribution  strikes a good balance between accuracy and analytical tractability. 
This model is motivated by the model proposed in \cite{liang13}: the authors used this distribution to model latency of file queries from cloud storage systems.
The shifted-exponential distribution is the sum of a constant and an exponential random variable, i.e., 
\begin{align}
\PP(T_0 \leq t) = 1 - e^{-\mu(t-1)},~~ \forall t \geq 1, \label{eq:shifted_exp}
\end{align}
where the exponential rate $\mu$ is called the \emph{straggling parameter}. 

With this shifted-exponential model, we first characterize a lower bound on the fundamental limit of the average runtime.
\begin{proposition}
The average runtime of any distributed algorithm, in a distributed computing cluster with $n$ workers, is lower bounded by $\frac{1}{n}$.
\end{proposition}
\begin{IEEEproof}
One can show that the average runtime of any distributed algorithm strictly decreases if the mother runtime distribution is replaced with a deterministic constant $1$. 
Thus, the optimal average runtime with this deterministic mother distribution serves as a strict lower bound on the optimal average runtime with the shifted-exponential mother distribution. 
The constant mother distribution implies that stragglers do not exist, and hence the uncoded distributed algorithm achieves the optimal runtime, which is $\frac{1}{n}$.
\end{IEEEproof}

We now analyze the average runtime of uncoded/coded distributed algorithms.
We assume that $n$ is large, and $k$ is linear in $n$. Accordingly, we approximate $H_n \defeq \sum_{i=1}^{n}{\frac{1}{i}} \simeq \log n$ and $H_{n-k} \simeq \log{(n-k)}$.
We first note that the expected value of the maximum of $n$ independent exponential random variables with rate $\mu$ is $\frac{H_n}{\mu}$. 
Thus, the average runtime of an uncoded distributed algorithm is
\begin{align}
\EE[T^\text{uncoded}_\text{overall}] = \frac{1}{n}\left( 1 + \frac{1}{\mu}\log{n}  \right) = \Theta\left(\frac{\log n}{n}\right).
\end{align}
For the average runtime of an $\frac{n}{k}$-Repetition-coded distributed algorithm, we first note that the minimum of $\frac{n}{k}$ independent exponential random variables with rate $\mu$ is distributed as an exponential random variable with rate $\frac{n}{k}\mu$. Thus, 
\begin{align}\label{eq:rep}
\EE[T^\text{Repetition-coded}_\text{overall}] = \frac{1}{k} \left( 1 + \frac{k}{n\mu}\log{k} \right) = \Theta\left(\frac{\log n}{n}\right).
\end{align}
Finally, we note that the expected value of the \thth{k} statistic of $n$ independent exponential random variables of rate $\mu$ is $\frac{H_{n} - H_{n-k}}{\mu}$. Therefore, 
\begin{align}
\EE[T^\text{MDS-coded}_\text{overall}] = \frac{1}{k} \left( 1 + \frac{1}{\mu}\log \left(\frac{n}{n-k} \right) \right) = \Theta\left(\frac{1}{n}\right). \label{eq:mds_coded_2}
\end{align}

Using these closed-form expressions of the average runtime, one can easily find the optimal value of $k$ that achieves the optimal average runtime. The following lemma characterizes the optimal repetition code for the repetition-coded algorithms and their runtime performances. 
\begin{lemma}[Optimal repetition-coded distributed algorithms] \label{thm:optimal_rep}
If $\mu \geq 1$, the average runtime of an $\frac{n}{k}$-Repetition-coded distributed algorithm, in a distributed computing cluster with $n$ workers, is minimized by setting $k = n$, i.e., \emph{not} replicating tasks. 
If $\mu = \frac{1}{v}$ for some integer $v > 1$, the average runtime is minimized by setting $k = \mu n$, and the corresponding minimum average runtime is $\frac{1}{n\mu}\left(1 + \log(n\mu)\right)$.
\end{lemma}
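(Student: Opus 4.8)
The plan is to view the right-hand side of \eqref{eq:rep} as a function of a continuous variable and exploit its unimodality. Set
$g(k) \defeq \frac{1}{k}\Bigl(1 + \frac{k}{n\mu}\log k\Bigr) = \frac{1}{k} + \frac{\log k}{n\mu}$
for the (approximate) expected runtime of the $\frac{n}{k}$-repetition-coded algorithm, and treat $g$ as a differentiable function on $(0,\infty)$. Then $g'(k) = -\frac{1}{k^2} + \frac{1}{n\mu k} = \frac{1}{k}\bigl(\frac{1}{n\mu} - \frac{1}{k}\bigr)$, which is negative for $k < n\mu$ and positive for $k > n\mu$; hence $g$ strictly decreases on $(0, n\mu)$, strictly increases on $(n\mu, \infty)$, and has unique global minimizer $k^\star = n\mu$ over the reals.

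With this in hand both claims follow by restricting to admissible code parameters, i.e.\ $k \in \{1,\dots,n\}$ with $\frac nk$ a positive integer. If $\mu \ge 1$ then $n\mu \ge n$, so $g$ is (weakly) decreasing over the whole admissible range and is minimized at $k = n$, which corresponds to $\frac nk = 1$, i.e.\ no replication. If $\mu = \frac1v$ for an integer $v > 1$ (and, implicitly, $v \mid n$), then $k^\star = n\mu = \frac nv$ is itself admissible, and since it is the unconstrained minimizer it is also the constrained one; evaluating gives $g(n\mu) = \frac{1}{n\mu} + \frac{\log(n\mu)}{n\mu} = \frac{1}{n\mu}\bigl(1 + \log(n\mu)\bigr)$, the stated optimum.

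The one point requiring care — and where I would be most explicit — is the interaction between the continuous optimization and the integrality constraint on $\frac nk$: the argument works precisely because in both regimes the real optimizer $k^\star$ lands on an admissible value, so no rounding is needed. I would also note that the $H_k \simeq \log k$ approximation behind \eqref{eq:rep} is only cosmetic here: the same monotonicity computation applied verbatim to the exact expression $\frac 1k + \frac{H_k}{n\mu}$ (using $H_k - H_{k-1} = \frac1k$) yields the identical minimizers, with the exact optimal runtime $\frac{1}{n\mu}\bigl(1 + H_{n\mu}\bigr)$. The bulk of the work is the one-line derivative sign analysis; everything else is bookkeeping over the admissible range.
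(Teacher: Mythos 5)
Your proposal is correct and follows essentially the same route as the paper's proof: differentiate the (approximate) expected runtime $\frac1k + \frac{\log k}{n\mu}$, locate the unique stationary point at $k=n\mu$, and then restrict to the admissible range, yielding $k=n$ when $\mu\ge 1$ and $k=\mu n$ when $\mu = 1/v$. Your added remarks on the integrality of $n/k$ and on redoing the computation with the exact harmonic-number expression are careful refinements the paper leaves implicit, but they do not change the argument.
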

\begin{IEEEproof}
It is easy to see that \eqref{eq:rep} as a function of $k$ has a unique extreme point. 
By differentiating \eqref{eq:rep} with respect to $k$ and equating it to zero, we have $k = \mu n$. Thus, if $\mu \geq 1$, one should set $k = n$; if $\mu = \frac{1}{v} < 1$ for some integer $v$, one should set $k = \mu n$. 
\end{IEEEproof}
The above lemma reveals that the optimal repetition-coded distributed algorithm can achieve a lower average runtime than the uncoded distributed algorithm if $\mu < 1$; however, the optimal repetition-coded distributed algorithm still suffers from the factor of $\Theta(\log n)$, and cannot achieve the order-optimal performance.
The following lemma, on the other hand, shows that the optimal MDS-coded distributed algorithm can achieve the order-optimal average runtime performance. 
\begin{lemma}[Optimal MDS-coded distributed algorithms] \label{thm:optimal_mds}
The average runtime of an $(n,k)$-MDS-coded distributed algorithm, in a distributed computing cluster with $n$ workers, can be minimized by setting $k = k^\star$ where 
\begin{align}
k^{\star} = \left[1 + \frac{1}{W_{-1}(-e^{-\mu-1})} \right] n,
\end{align}
and $W_{-1}(\cdot)$ is the lower branch of Lambert W function\footnote{$W_{-1}(x)$, the lower branch of Lambert W function evaluated at $x$, is the unique solution of $te^{t} = x$ and $t \leq -1$.} Thus,
\begin{align}
T^\star \defeq \min_{k}~\EE[T^\text{MDS-coded}_\text{overall}] = \frac{-W_{-1}(-e^{-\mu-1})}{\mu n} \defeq \frac{\gamma^\star(\mu)}{n}.
\end{align}
\end{lemma}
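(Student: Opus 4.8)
The plan is to minimize the closed-form expression \eqref{eq:mds_coded_2} for $\EE[T^\text{MDS-coded}_\text{overall}]$ directly over $k$ (treated as a continuous parameter in $(0,n)$, as elsewhere in this section, and with $\mu>0$ since it is an exponential rate). Substituting $\alpha = k/n$ and pulling out the common factor $1/n$, this reduces to minimizing the scalar function
\[
h(\alpha) = \frac{1}{\alpha}\left(1 - \frac{1}{\mu}\log(1-\alpha)\right), \qquad \alpha \in (0,1),
\]
after which $k^\star = \alpha^\star n$ and $\min_k \EE[T^\text{MDS-coded}_\text{overall}] = h(\alpha^\star)/n$.

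First I would do the first-order analysis. Differentiating gives
\[
h'(\alpha) = \frac{1}{\alpha^2}\left(-1 + \frac{1}{\mu}\left(\frac{\alpha}{1-\alpha} + \log(1-\alpha)\right)\right),
\]
so writing $\phi(\alpha) := \frac{\alpha}{1-\alpha} + \log(1-\alpha)$ one checks $\phi(0)=0$ and $\phi'(\alpha) = \alpha/(1-\alpha)^2 > 0$ on $(0,1)$; hence $\phi$ increases strictly from $0$ to $+\infty$, so $h'$ has a unique zero $\alpha^\star$ characterized by $\phi(\alpha^\star)=\mu$, with $h'<0$ before it and $h'>0$ after. Since $h(\alpha)\to+\infty$ at both endpoints, $\alpha^\star$ is the unique global minimizer; this disposes of the optimality claim and leaves only the identification of $\alpha^\star$.

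The heart of the argument — and the step I expect to be the main obstacle — is converting the stationarity condition $\frac{\alpha^\star}{1-\alpha^\star}+\log(1-\alpha^\star)=\mu$ into Lambert-$W$ form. I would set $u = 1-\alpha^\star$ and then $v=\log u$, obtaining $e^{-v}+v = \mu+1$, i.e. $(\mu+1-v)e^{v}=1$; with $z := v-(\mu+1)$ this is $z e^{z} = -e^{-\mu-1}$. The branch selection must be justified: since $u=1-\alpha^\star\in(0,1)$ we have $v<0$, hence $z = v-\mu-1 < -1$ (using $\mu>0$), so $z$ is the value of the \emph{lower} branch, $z = W_{-1}(-e^{-\mu-1})$; note $-e^{-\mu-1}\in(-1/e,0)$ exactly when $\mu>0$, so $W_{-1}$ is defined and $W_{-1}(-e^{-\mu-1})\le -1$. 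Writing $W := W_{-1}(-e^{-\mu-1})$ and unwinding, $We^{W}=-e^{-\mu-1}$ yields $W+\mu+1 = \log(-1/W)$, i.e. $1-\alpha^\star = -1/W$, which is precisely $\alpha^\star = 1 + \tfrac{1}{W_{-1}(-e^{-\mu-1})}$; since $W\le -1$ this $\alpha^\star$ automatically lies in $(0,1)$.

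Finally I would substitute back into $h$: from $1-\alpha^\star=-1/W$ we get $\log(1-\alpha^\star)=W+\mu+1$, hence $1-\frac{1}{\mu}\log(1-\alpha^\star) = -\frac{W+1}{\mu}$ and $\frac{1}{\alpha^\star} = \frac{W}{W+1}$, so $h(\alpha^\star) = -\frac{W}{\mu}$. Therefore $\min_k \EE[T^\text{MDS-coded}_\text{overall}] = \frac{-W_{-1}(-e^{-\mu-1})}{\mu n} = \frac{\gamma^\star(\mu)}{n}$, as claimed. Beyond the branch-selection subtlety, the only care points are these defined-ness and range checks, both of which follow from $\mu>0$ and $W_{-1}(\cdot)\le -1$.
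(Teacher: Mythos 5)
Your proposal is correct and follows essentially the same route as the paper: differentiate the closed-form expression \eqref{eq:mds_coded_2} in $k$ (equivalently $\alpha=k/n$), solve the stationarity condition, and convert it to Lambert-$W$ form on the lower branch. You additionally justify the uniqueness of the critical point and the branch/domain checks that the paper dismisses as ``easy to see,'' which is a welcome but not substantively different refinement.
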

\begin{IEEEproof}
It is easy to see that \eqref{eq:mds_coded_2} as a function of $k$ has a unique extreme point. 
By differentiating \eqref{eq:mds_coded_2} with respect to $k$ and equating it to zero, we have 
$\frac{1}{k^\star}\left(1 + \frac{1}{\mu}\log \left( \frac{n}{n-k^\star}  \right) \right) = \frac{1}{\mu} \frac{1}{n-k^\star}$.
By setting $k = \alpha^\star n$, we have $\frac{1}{\alpha^\star}\left(1 + \frac{1}{\mu}\log \left( \frac{1}{1-\alpha^\star}  \right) \right) = \frac{1}{\mu} \frac{1}{1-\alpha^\star}$, which implies $\mu +1 = \frac{1}{1-\alpha^\star} - \log \left( \frac{1}{1-\alpha^\star}  \right)$.
By defining $\beta = \frac{1}{1-\alpha^\star}$ and exponentiating both the sides, we have $e^{\mu+1} = \frac{e^{\beta}}{\beta}$.
Note that the solution of $\frac{e^x}{x} = t$, $t \geq e$ and $x \geq 1$ is $x=-W_{-1}(-\frac{1}{t})$. 
Thus, $\beta = -W_{-1}(-e^{-\mu -1})$.
By plugging the above equation into the definition of $\beta$, the claim is proved.
\end{IEEEproof}

We plot $nT^\star$ and $\frac{k^\star}{\mu}$ as functions of $\mu$ in Fig.~\ref{fig:alpha_and_gamma}.
\begin{figure}[t]
\centering
\begin{subfigure}{0.45\textwidth}
\centering
\includegraphics[width=\textwidth]{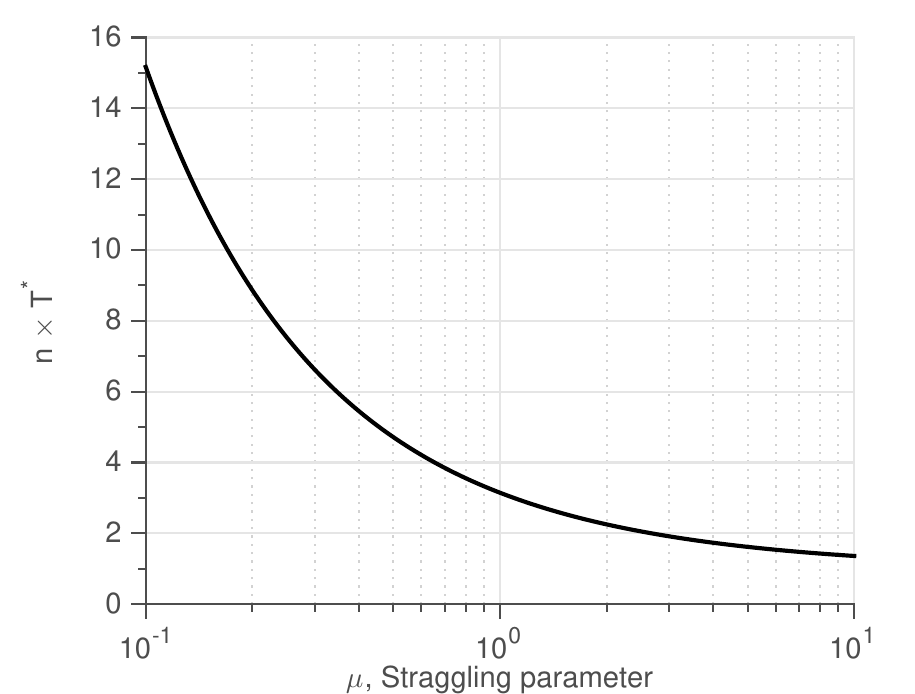}
\caption{\footnotesize{$nT^\star$ as a function of $\mu$.}}
\label{fig:mds_coeff}
\end{subfigure}
~~
\begin{subfigure}{0.45\textwidth}
\centering
\includegraphics[width=\textwidth]{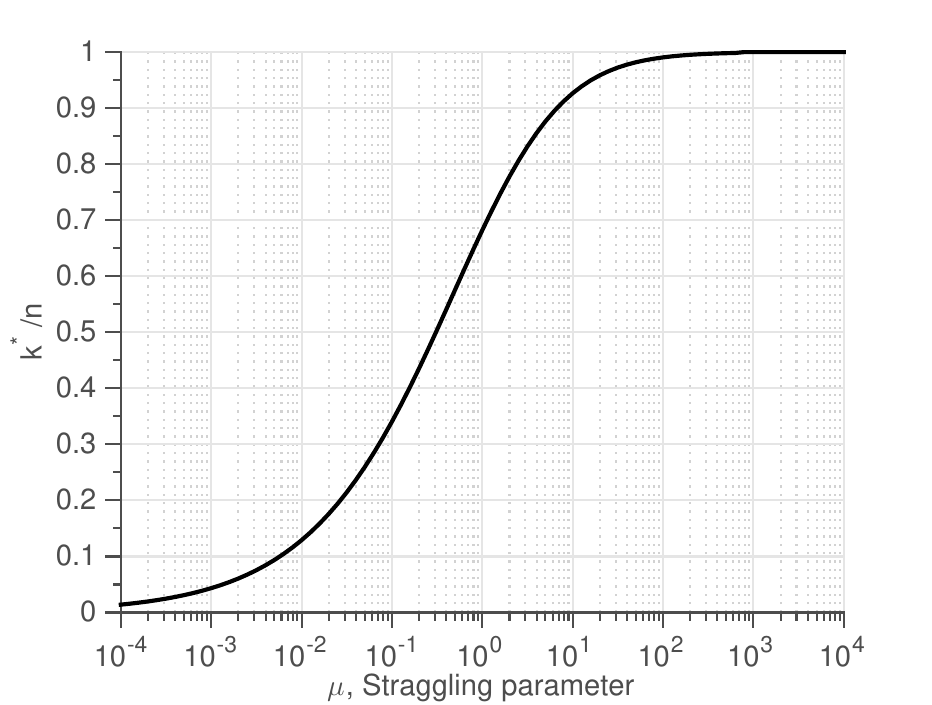}
\caption{\footnotesize{$\frac{k^\star}{n}$ as a function of $\mu$.}}
\label{fig:alpha_star}
\end{subfigure}
\caption{\footnotesize{\textbf{$nT^\star$ and $\frac{k^\star}{n}$ as functions of $\mu$.} As a function of the straggling parameter, we plot the normalized optimal computing time and the optimal value of $k$.
}}
\label{fig:alpha_and_gamma}
\end{figure} 
In addition to the order-optimality of MDS-coded distributed algorithms, the above lemma precisely characterizes the gap between the achievable runtime and the optimistic lower bound of $\frac{1}{n}$. 
For instance, when $\mu > 1$, the optimal average runtime is only $3.15$ away from the lower bound.

\begin{remark}[Storage overhead]
So far, we have considered only the runtime performance of distributed algorithms. 
Another important metric to be considered is the storage cost. 
When coded computation is being used, the storage overhead may increase.
For instance, the MDS-coded distributed algorithm for matrix multiplication, described in Sec.~\ref{sec:coded_compute_def}, requires $\frac{1}{k}$ of the whole data to be stored at each worker, while the uncoded distributed algorithm requires $\frac{1}{n}$. Thus, the storage overhead factor is $\frac{\frac{1}{k} - \frac{1}{n}}{\frac{1}{n}} = \frac{n}{k} - 1$. If one uses the runtime-optimal MDS-coded distributed algorithm for matrix multiplication, the storage overhead is $\frac{n}{k^\star}-1 = \frac{1}{\alpha^\star}-1$. 
\end{remark}

\subsection{Coded Gradient Descent: An MDS-coded Distributed Algorithm for Linear Regression} \label{sec:gradient_descent}
\begin{figure*}[t!]
\centering
\begin{subfigure}[t]{0.4\textwidth}
    \centering   
    \includegraphics[width=\textwidth]{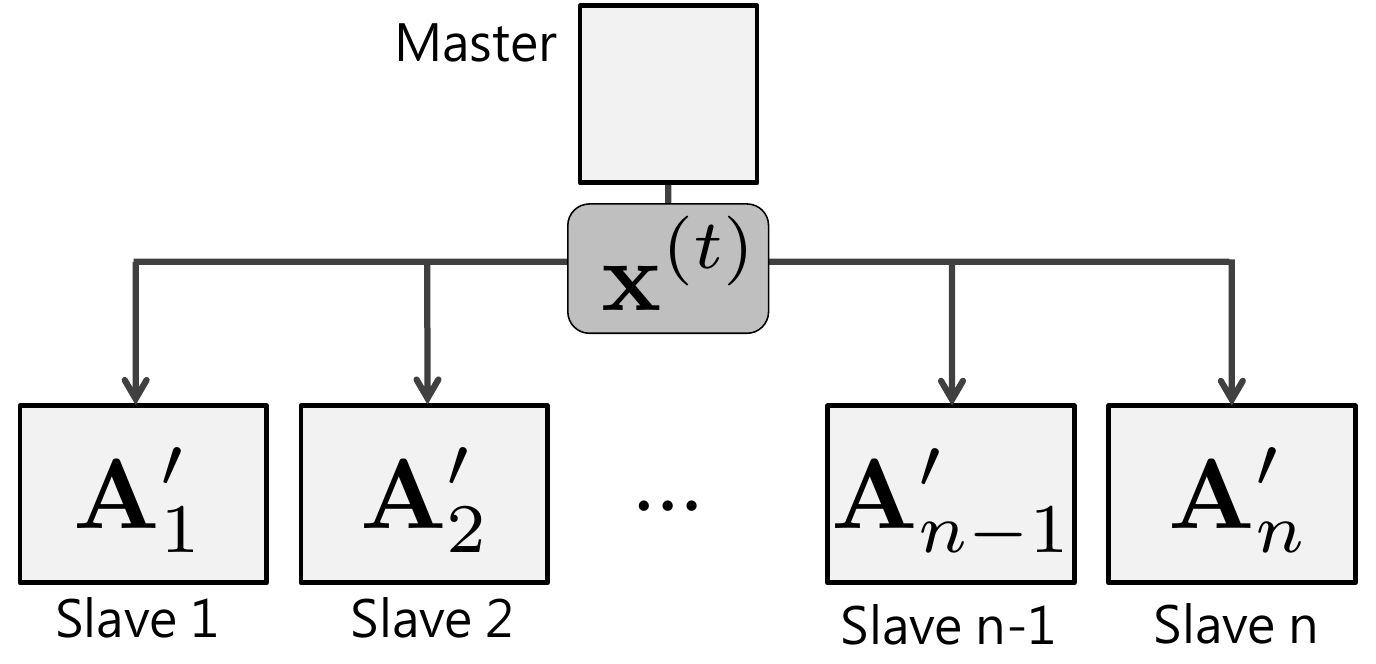}
  \caption{\footnotesize{In the beginning of the \thth{t} iteration, the master node multicasts $\bsx^{(t)}$ to the worker nodes. \label{fig:lr1}} }
\end{subfigure}
~~~~~~~~
\begin{subfigure}[t]{0.4\textwidth}
    \centering   
    \includegraphics[width=\textwidth]{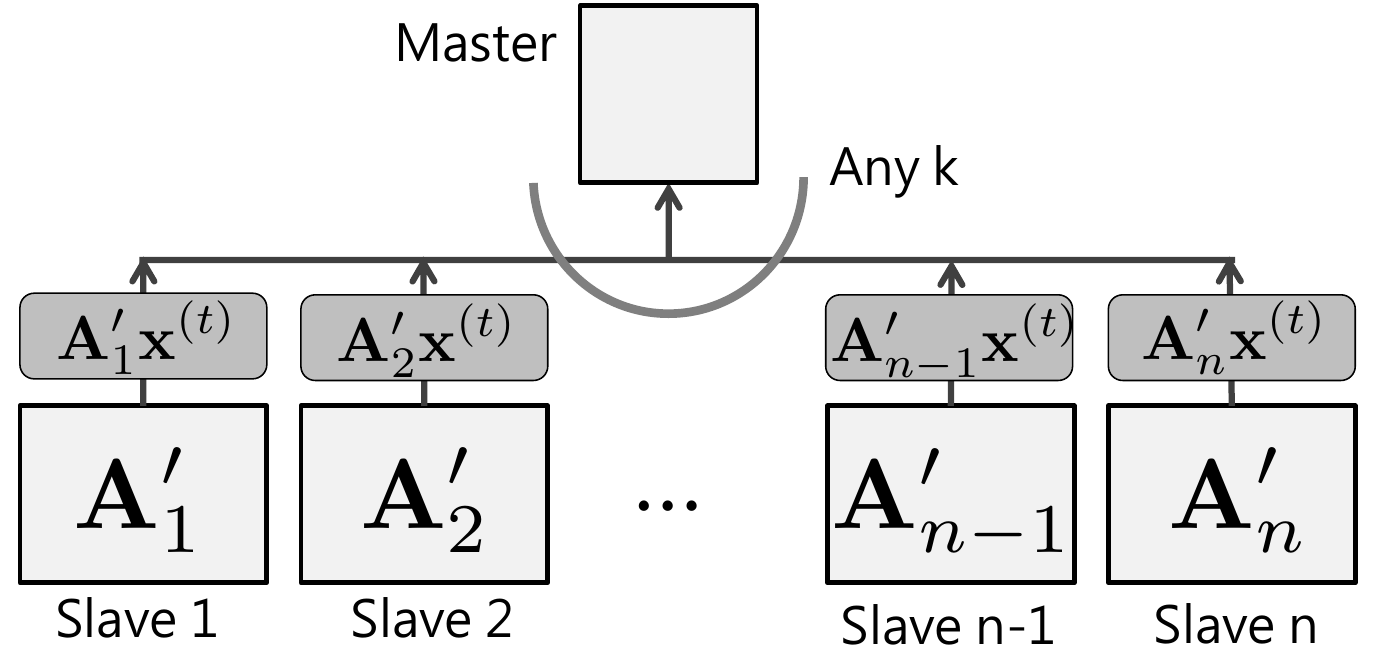}
  \caption{\footnotesize{The master node waits for the earliest responding $k$ worker nodes, and computes $\ba\bsx^{(t)}$. \label{fig:lr2}} }
\end{subfigure}
\\ \vspace{0.04in}
\begin{subfigure}[t]{0.4\textwidth}
    \centering   
\includegraphics[width=\textwidth]{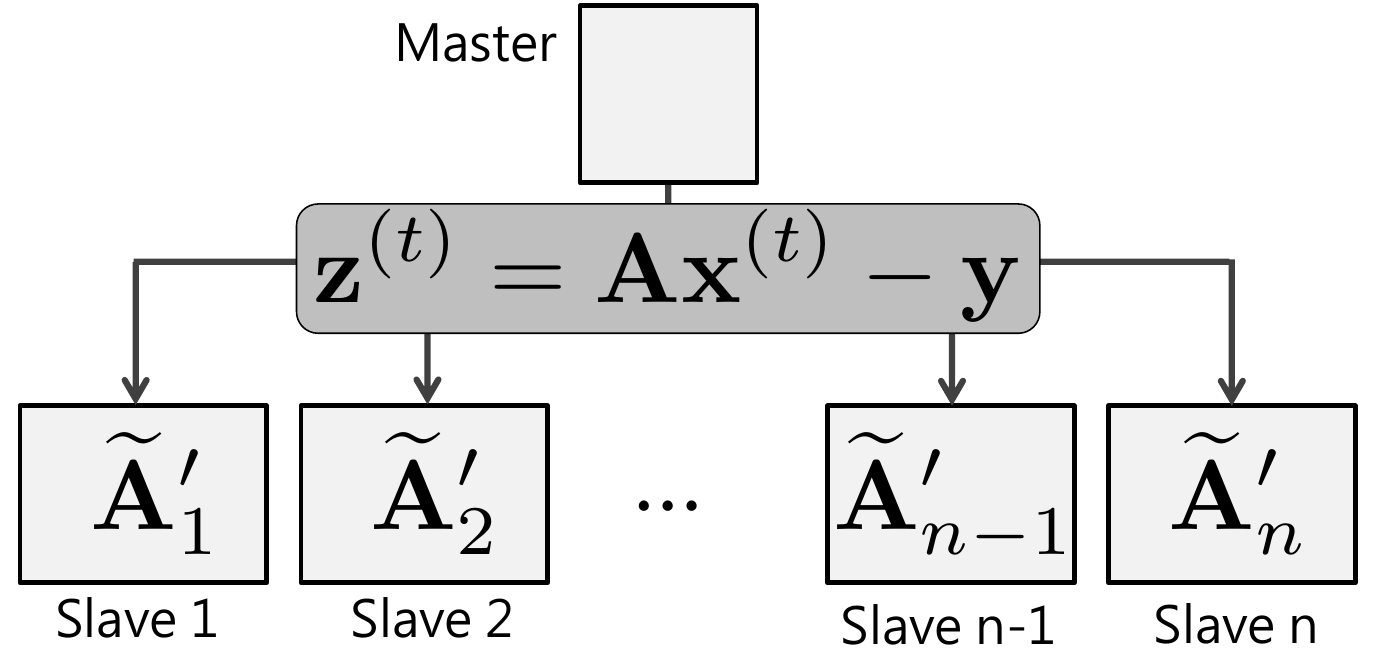}
  \caption{\footnotesize{The master node computes $\bsz^{(t)}$ and  multicasts it to the worker nodes. \label{fig:lr3}} }
\end{subfigure}
~~~~~~~~
\begin{subfigure}[t]{0.4\textwidth}
    \centering   
    \includegraphics[width=\textwidth]{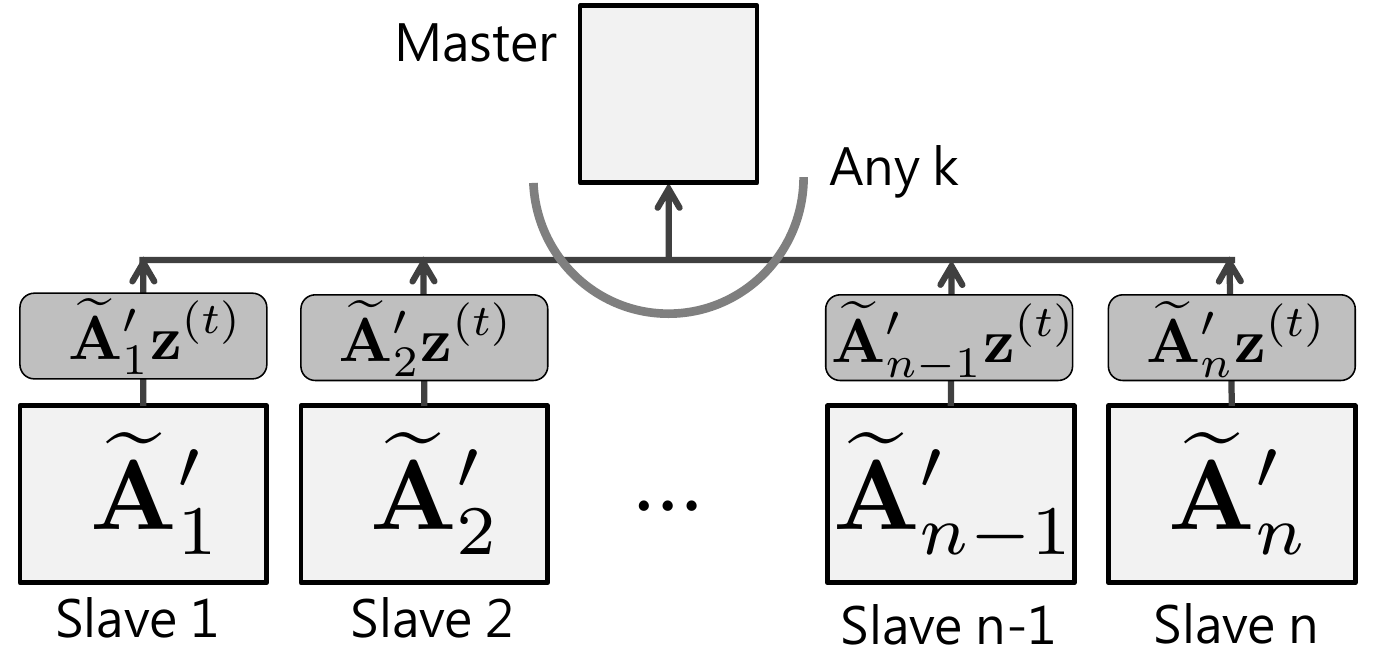}
  \caption{\footnotesize{The master node waits for the $k$ earliest responding worker nodes, and computes $\ba^T\bsz^{(t)}$ or $\nabla f(\bsx^{(t)})$. \label{fig:lr4}} }
\end{subfigure}
\caption{\footnotesize{\textbf{Illustration of a coded gradient descent approach for linear regression.} The coded gradient descent computes a gradient of the objective function using \emph{coded matrix multiplication} twice: in each iteration, it first computes $\ba \bsx^{(t)}$ as depicted in (a) and (b), and then computes $\ba^T (\ba\bsx^{(t)}-\bsy)$ as depicted in (c) and (d). 
}}
\label{fig:protocol_linear_regression}
\end{figure*}
In this section, as a concrete application of coded matrix multiplication, we propose the \emph{coded gradient descent} for solving large-scale linear regression problems. 

We first describe the (uncoded) gradient-based distributed algorithm. Consider the following linear regression,
\begin{align} \label{eq:linear_regression}
\min_\bsx f(\bsx) \defeq \min_{\bsx} \frac{1}{2}\|\ba\bsx - \bsy\|^2_2,
\end{align}
where $\bsy \in \mathbb{R}^{q}$ is the label vector, $\ba= [\bsa_1, \bsa_2, \ldots, \bsa_{q}]^T \in \mathbb{R}^{q \times r}$ is the data matrix, and $\bsx \in \mathbb{R}^{r}$ is the unknown weight vector to be found. 
We seek a distributed algorithm to solve this regression problem. 
Since $f(\bsx)$ is convex in $\bsx$, the gradient-based distributed algorithm works as follows. 
We first compute the objective function's gradient: $\nabla f (\bsx)= \ba^T(\ba\bsx-\bsy)$. Denoting by $\bsx^{(t)}$ the estimate of $\bsx$ after the \thth{t} iteration, we iteratively update $\bsx^{(t)}$ according to the following equation.
\begin{align}
\bsx^{(t+1)} = \bsx^{(t)} - \eta \nabla f(\bsx^{(t)}) = \bsx^{(t)} - \eta \ba^T (\ba\bsx^{(t)} - \bsy) \label{eq:exact_gradient}
\end{align}
The above algorithm is guaranteed to converge to the optimal solution if we use a small enough step size $\eta$  \cite{boyd2004convex}, and can be easily distributed. We describe one simple way of parallelizing the algorithm, which is implemented in many open-source machine learning libraries including Spark \texttt{mllib} \cite{spark_mllib}. 
As $\ba^T(\ba\bsx^{(t)}-\bsy)=\sum_{i=1}^{q}{ \bsa_i (\bsa_i^T \bsx^{(t)} - \bsy_i) }$, gradients can be computed in a distributed way by computing partial sums at different worker nodes and then adding all the partial sums at the master node. 
This distributed algorithm is an \emph{uncoded} distributed algorithm: in each round, the master node needs to wait for all the task results in order to compute the gradient.\footnote{Indeed, one may apply another coded computation scheme called Gradient Coding\,\cite{tandon2016gradient}, which was proposed after our conference publications. By applying Gradient Coding to this algorithm, one can achieve straggler tolerance but at the cost of significant computation and storage overheads. More precisely, it incurs $\Theta(n)$ larger computation and storage overheads in order to protect the algorithm from $\Theta(n)$ stragglers. Later in this section, we will show that our coded computation scheme, which is tailor-designed for linear regression, incurs $\Theta(1)$ overheads to protect the algorithm from $\Theta(n)$ stragglers.}
Thus, the runtime of each update iteration is determined by the slowest response among all the worker nodes.

We now propose the \emph{coded gradient descent}, a coded distributed algorithm for linear regression problems. 
Note that in each iteration, the following two matrix-vector multiplications are computed.
\begin{align}
\ba\bsx^{(t)}, ~~\ba^T (\ba\bsx^{(t)} - \bsy) \defeq \ba^T \bsz^{(t)}
\end{align}
In Sec.~\ref{sec:coded_compute_def}, we proposed the MDS-coded distributed algorithm for matrix multiplication.
Here, we apply the algorithm twice to compute these two multiplications in each iteration.
More specifically, for the first matrix multiplication, we choose $1 \leq k_1 < n$ and use an $(n,k_1)$-MDS-coded distributed algorithm for matrix multiplication to encode the data matrix $\ba$.
Similarly for the second matrix multiplication, we choose $1 \leq k_2 < n$ and use a $(n,k_2)$-MDS-coded distributed algorithm to encode the transpose of the data matrix.
Denoting the \thth{i} row-split (column-split) of $\ba$ as $\ba_i$ ($\widetilde{\ba}_i$), 
the \thth{i} worker stores both $\ba_i$ and $\widetilde{\ba}_i$.
In the beginning of each iteration, the master node multicasts $\bsx^{(t)}$ to the worker nodes, each of which computes the local matrix multiplication for $\ba\bsx^{(t)}$ and sends the result to the master node. 
Upon receiving \emph{any} $k_1$ task results, the master node can start decoding the result and obtain $\bsz^{(t)} = \ba \bsx^{(t)}$. 
The master node now multicasts $\bsz^{(t)}$ to the workers, and the workers compute local matrix multiplication for $\ba^T \bsz^{(t)}$. 
Finally, the master node can decode $\ba^T \bsz^{(t)}$ as soon as it receives any $k_2$ task results, and can proceed to the next iteration. 
Fig.~\ref{fig:protocol_linear_regression} illustrates the protocol with $k_1 = k_2 = n-1$. 

\begin{remark}[Storage overhead of the coded gradient descent]
The coded gradient descent requires each node to store a $(\frac{1}{k_1} + \frac{1}{k_2} - \frac{1}{k_1 k_2})$-fraction of the data matrix. As the minimum storage overhead per node is a $\frac{1}{n}$-fraction of the data matrix, the relative storage overhead of the coded gradient descent algorithm is at least about factor of $2$, if $k_1 \simeq n$ and $k_2 \simeq n$. 
\end{remark}

\subsection{Experimental Results} \label{sec:simulation}
In order to see the efficacy of coded computation, we implement the proposed algorithms and test them on an Amazon EC2 cluster. We first obtain the empirical distribution of task runtime in order to observe how frequently stragglers appear in our testbed by measuring round-trip times between the master node and each of $10$ worker instances on an Amazon EC2 cluster. 
Each worker computes a matrix-vector multiplication and passes the computation result to the master node, and the master node measures round trip times that include both computation time and communication time. 
Each worker repeats this procedure $500$ times, and we obtain the empirical distribution of round trip times across all the worker nodes. 
\begin{figure}[h]
\centering   
\includegraphics[width=0.45\textwidth]{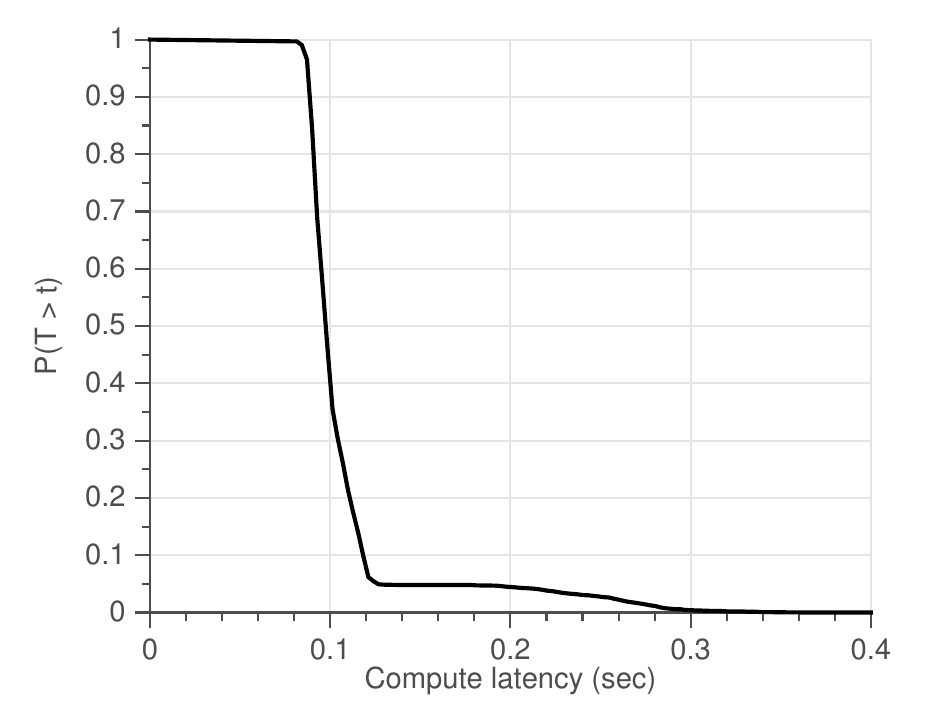}
\caption{\footnotesize{\textbf{Empirical CCDF of the measured round trip times.} We measure round trip times between the master node and each of $10$ worker nodes on an Amazon EC2 cluster. A round trip time consists of transmission time of the input vector from the master to a worker, computation time, and transmission time of the output vector from a worker to the master.}}
\label{fig:measurements}
\end{figure}
%\caption{\footnotesize{Empirical CCDF of the measured round trip times.}}

%\centering
%\begin{subfigure}{0.45\textwidth}
%\centering   
%\includegraphics[width=\textwidth]{figs/hist.eps}
%\caption{\footnotesize{Histogram of the measured round trip times}}
%\label{fig:hist}
%\end{subfigure}
%~~
%\begin{subfigure}{0.45\textwidth}

In Fig.~\ref{fig:measurements}, we plot the histogram and complementary CDF (CCDF) of measured computing times; the average round trip time is $0.11$ second, and the \thth{95} percentile latency is $0.20$ second, i.e., roughly five out of hundred tasks are going to be roughly two times slower than the average tasks.
Assuming the probability of a worker being a straggler is $5\%$, if one runs an uncoded distributed algorithm with $10$ workers, the probability of not seeing such a straggler is only about $60\%$, so the algorithm is slowed down by a factor of more than $2$ with probability $40\%$.
Thus, this observation strongly emphasizes the necessity of an efficient straggler mitigation algorithm.  
In Fig.~\ref{fig:runtime_distribution_a}, we plot the runtime distributions of uncoded/coded distributed algorithms using this empirical distribution as the mother runtime distribution. 
When an uncoded distributed algorithm is used, the overall runtime distribution entails a heavy tail, while the runtime distribution of the MDS-coded algorithm has almost no tail.

\begin{figure*}[t!]
\centering
\begin{subfigure}{0.255\textwidth}
\includegraphics[width=\textwidth]{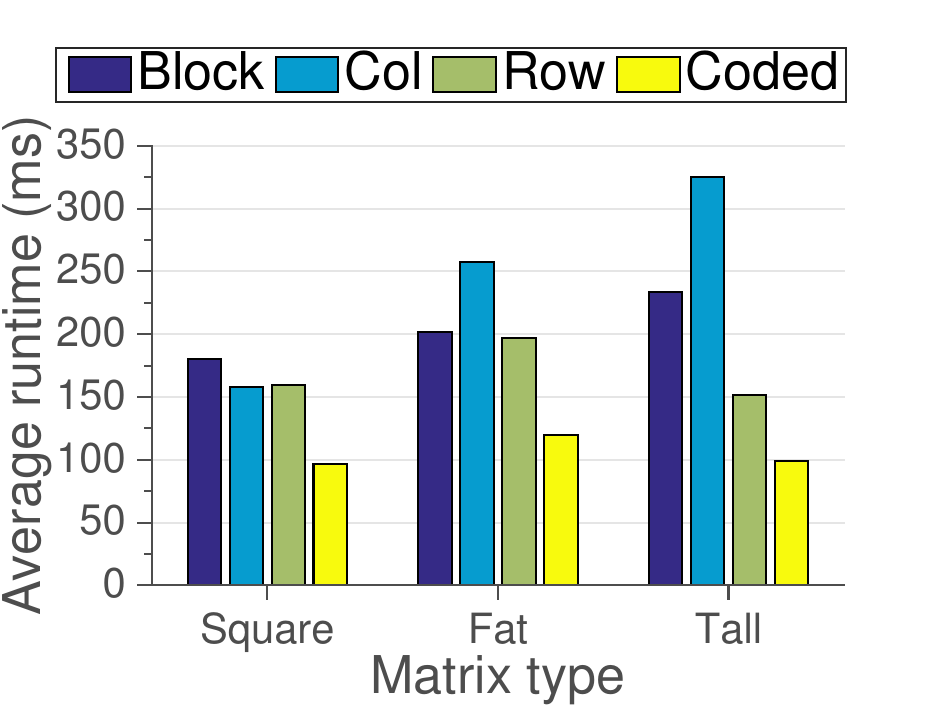}
\caption{\footnotesize{\texttt{m1-small}, average runtime}}
\label{fig:mm_1}
\end{subfigure}
\!\!\!\!\!\!\!
\begin{subfigure}{0.255\textwidth}
\includegraphics[width=\textwidth]{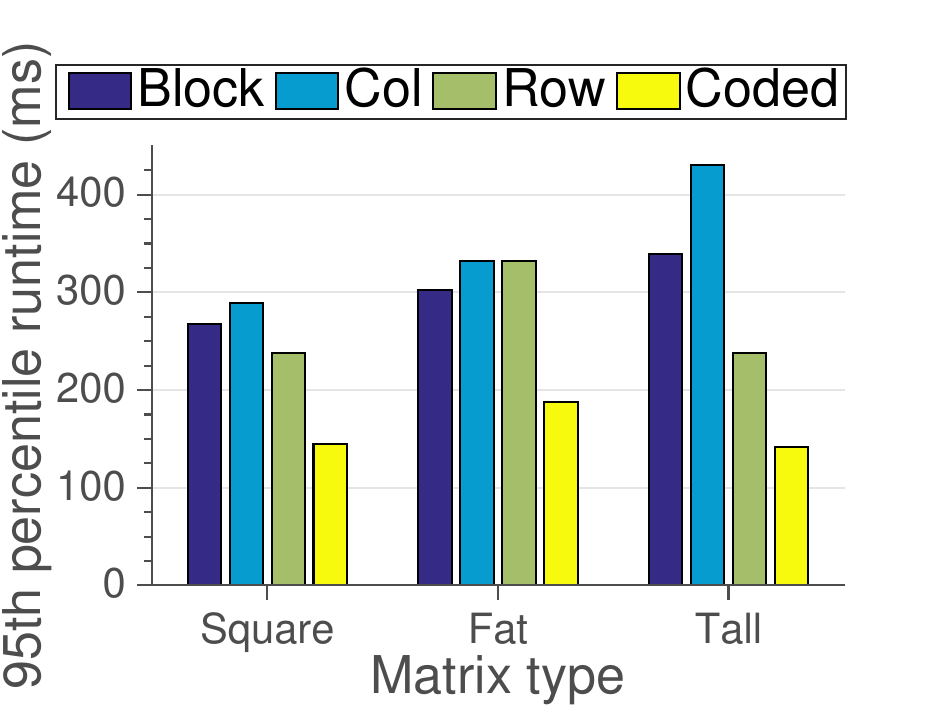}
\caption{\footnotesize{\texttt{m1-small}, tail runtime}}
\label{fig:mm_2}
\end{subfigure}
\!\!\!\!\!\!\!
\begin{subfigure}{0.255\textwidth}
\includegraphics[width=\textwidth]{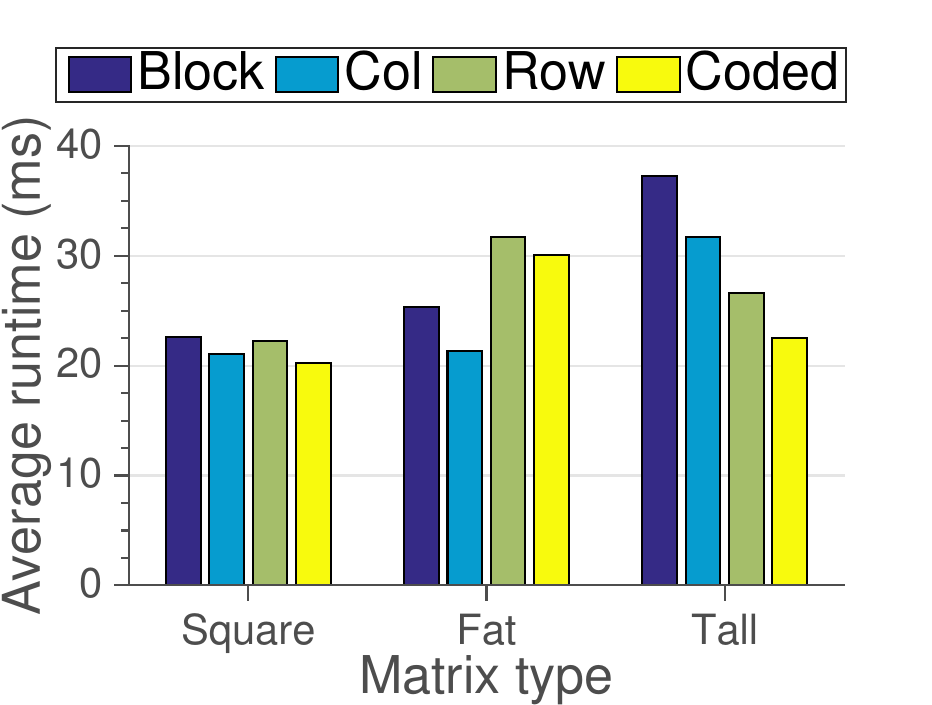}
\caption{\footnotesize{\texttt{c1-med}, average runtime}}
\label{fig:mm_3}
\end{subfigure}
\!\!\!\!\!\!\!
\begin{subfigure}{0.255\textwidth}
\includegraphics[width=\textwidth]{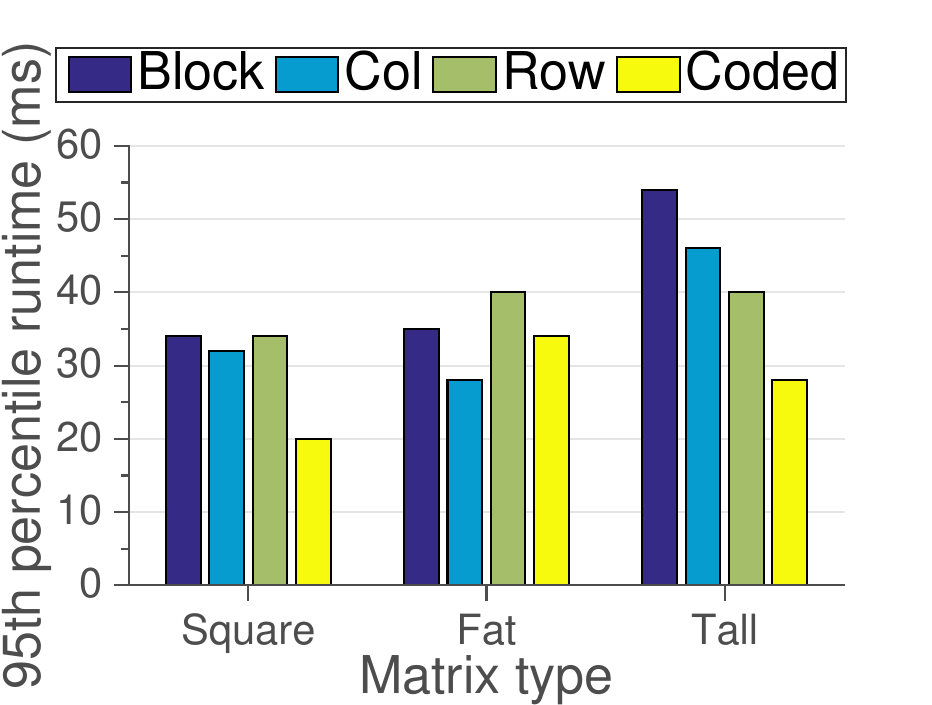}
\caption{\footnotesize{\texttt{c1-med}, tail runtime}}
\label{fig:mm_4}
\end{subfigure}
\caption{\footnotesize{\textbf{Comparison of parallel matrix multiplication algorithms.} We compare various parallel matrix multiplication algorithms: block, column-partition, row-partition, and coded (row-partition) matrix multiplication. We implement the four algorithms using OpenMPI and test them on Amazon EC2 cluster of $25$ instances. We measure the average and the \thth{95} percentile runtime of the algorithms.
Plotted in (a) and (b) are the results with \texttt{m1-small} instances, and in (c) and (d) are the results with \texttt{c1-medium} instances.}}
\label{fig:mm}
\end{figure*} 
We then implement the coded matrix multiplication in C++ using OpenMPI\cite{openmpi}​ and benchmark on a cluster of $26$ EC2 instances ($25$ workers and a master)\footnote{For the benchmark, we manage the cluster using the StarCluster toolkit \cite{starcluster}. Input data is generated using a Python script, and the input matrix is row-partitioned for each of the workers (with the required encoding as described in the previous sections) in a preprocessing step.
The procedure begins by having all of the worker nodes read in their respective row-partitioned matrices. Then, the master node reads the input vector and distributes it to all worker nodes in the cluster through an asynchronous send (\texttt{MPI\_Isend}). Upon receiving the input vector, each worker node begins matrix multiplication through a BLAS \cite{blas} routine call and once completed sends the result back to the master using \texttt{MPI\_Send}.  The master node waits for a sufficient number of results to be received by continuously polling (\texttt{MPI\_Test}) to see if any results are obtained. The procedure ends when the master node decodes the overall result after receiving enough partial results.}.
Also, three uncoded matrix multiplication algorithms -- block, column-partition, and row-partition -- are implemented and benchmarked. 

We randomly draw a square matrix of size $5750\times 5750$, a fat matrix of size $5750\times 11500$, and a tall matrix of size $11500\times 5750$, and multiply them with a column vector. 
For the coded matrix multiplication, we choose an $(25,23)$ MDS code so that the runtime of the algorithm is not affected by any $2$ stragglers. 
Fig.~\ref{fig:mm} shows that the coded matrix multiplication outperforms all the other parallel matrix multiplication algorithms in most cases. 
On a cluster of \texttt{m1-small}, the most unreliable instances, the coded matrix multiplication achieves about $40\%$ average runtime reduction and about $60\%$ tail reduction compared to the best of the $3$ uncoded matrix multiplication algorithmss.
On a cluster of \texttt{c1-medium} instances, the coded algorithm achieves the best performance in most of the tested cases: the average runtime is reduced by at most $39.5\%$, and the \thth{95} percentile runtime is reduced by at most $58.3\%$.
Among the tested cases, we observe one case in which both the uncoded row-partition and the coded row-partition algorithms are outperformed by the uncoded column-partition algorithm. 
This is the case of a fat matrix multiplication with \texttt{c1-medium} instances. 
Note that when a row-partition algorithm is used, the size of messages from the master node to the workers is $n$ times larger compared with the case of column-partition algorithms. 
Thus, when the variability of computational times becomes low compared with that of communication time, the larger communication overhead of row-partition algorithms seems to arise, nullifying the benefits of coding.

\begin{figure}[h]
\centering
\begin{subfigure}{0.24\textwidth}
\includegraphics[width=\textwidth]{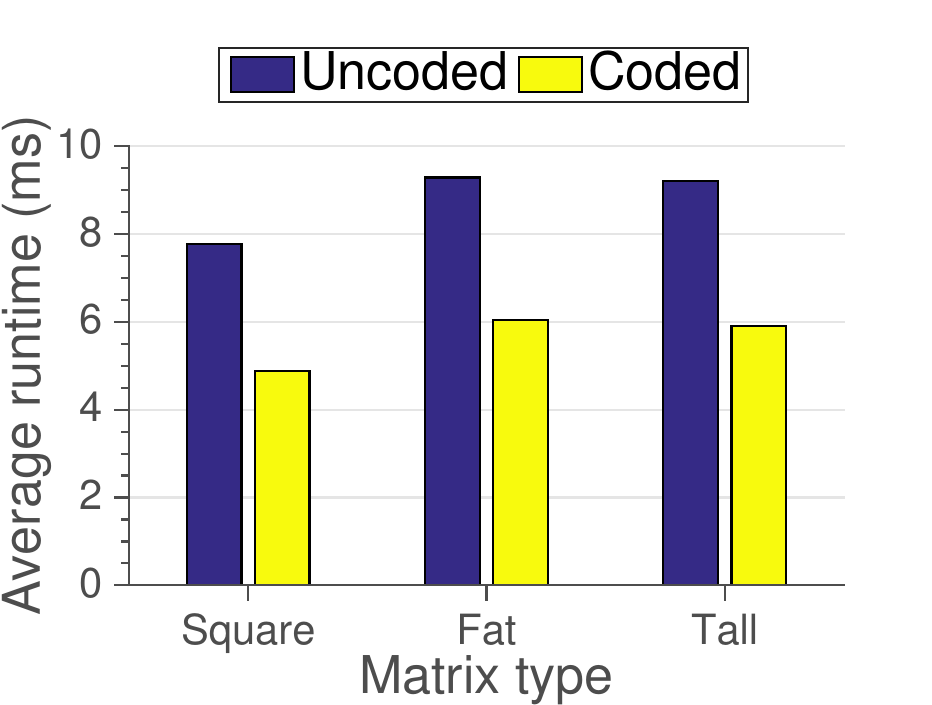}
\caption{\footnotesize{Average runtime}}
\end{subfigure}
\!\!\!\!\!\!\!
\begin{subfigure}{0.24\textwidth}
\includegraphics[width=\textwidth]{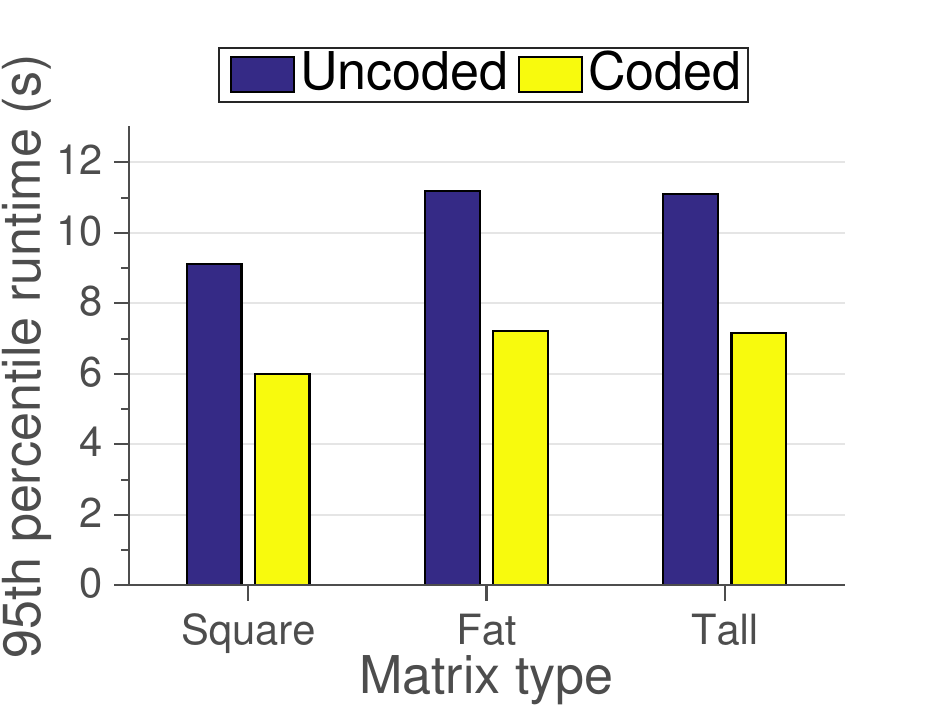}
\caption{\footnotesize{Tail runtime}}
\end{subfigure}
\caption{\footnotesize{\textbf{Comparison of parallel gradient algorithms.} We compare parallel gradient algorithms for linear regression problems. We implement both the uncoded gradient descent algorithm and the coded gradient descent algorithm using Open MPI, and test them on an Amazon EC2 cluster of $10$ worker instances. Plotted are the average and the \thth{95} percentile runtimes of the algorithms.\label{fig:lr}}}
\end{figure}

We also evaluate the performance of the coded gradient descent algorithm for linear regression.
The coded linear regression procedure is also implemented in C++ using OpenMPI, and benchmarked on a cluster of $11$ EC2 machines ($10$ workers and a master).
Similar to the previous benchmarks, we randomly draw a square matrix of size $2000\times 2000$, a fat matrix of size $400\times 10000$, and a tall matrix of size $10000\times 400$, and use them as a data matrix. 
We use a $(10,8)$-MDS code for the coded linear regression so that each multiplication of the gradient descent algorithm is not slowed down by up to $2$ stragglers.
Fig.~\ref{fig:lr} shows that the gradient algorithm with the \emph{coded matrix multiplication} significantly outperforms the one with the uncoded matrix multiplication; the average runtime is reduced by $31.3\%$ to $35.7\%$, and the tail runtime is reduced by $27.9\%$ to $35.6\%$.

\section{Coded Shuffling}
\label{sec:shuffling}
We shift our focus from solving the straggler problem to solving the communication bottleneck problem. 
In this section, we explain the problem of data-shuffling, propose the \emph{Coded Shuffling} algorithm, and analyze its performance.
%In Section~\ref{sec:sim}, we provide extensive simulation and experiment results that corroborate our theoretical findings. 

\subsection{Setup and Notations}
We consider a master-worker distributed setup, where the master node has access to the entire data-set.
Before every {\it iteration} of the distributed algorithm, the master node randomly partition the entire data set into $n$ subsets, say $\ba_1, \ba_2, \ldots, \ba_n$. 
The goal of the shuffling phase is to distribute each of these partitioned data sets to the corresponding worker so that each worker can perform its distributed task with its own exclusive data set after the shuffling phase.

We let $\ba(\mathcal{J}) \in \mathbb{R}^{| \mathcal{J}| \times r}, ~ \mathcal{J} \subset [q]$ be the concatenation of $|\mathcal J |$ rows of matrix $\ba$ with indices in $\mathcal J$. Assume that each worker node has a cache of size $s$ data rows (or $s \times r$ real numbers). In order to be able to fully store the data matrix across the worker nodes, we impose the inequality condition $q/n \leq s$. Further, clearly if $s > q$, the data matrix can be fully stored at each worker node, eliminating the need for any shuffling. Thus, without loss of generality we assume that $s \leq q$. As explained earlier working on the same data points at each worker node in all the iterations of the iterative optimization algorithm leads to slow convergence. Thus, to enhance the statistical efficiency of the algorithm, the data matrix is shuffled after each iteration. More precisely, at each iteration $t$, the set of data rows $[q]$ is partitioned uniformly at random into $n$ subsets $S^t_i, ~ 1 \leq i \leq n$ so that $\cup_{i=1}^{n} S^t_i = [q]$ and $S^t_i \cap S^t_j = \emptyset$ when $i \neq j$; thus, each worker node computes a fresh local function of the data.  Clearly, the data set that worker $i$ works on has cardinality $q/n$, i.e., $|S^t_i| = q / n$. 
Note that the sampling we consider here is {\em without replacement}, and hence these data sets are non-overlapping.

\subsection{Shuffling Schemes} 
We now present our coded shuffling algorithm, consisting of a transmission strategy for the master node, and caching and decoding strategies for the worker nodes.
Let $C_i^t$ be the cache content of node $i$ (set of row indices stored in cache $i$) at the end of iteration $t$. We design a transmission algorithm (by the master node) and a cache update algorithm to ensure that (i)
$S_i^t\subset C_i^t$;
and (ii) $C_i^t \setminus S_i^t$ is distributed uniformly at random without replacement in the set $[q] \setminus S_i^t$. The first condition ensures that at each iteration, the workers have access to the data set that they are supposed to work on. The second condition provides the opportunity of effective coded transmissions for shuffling in the next iteration as will be explained later.

\subsubsection{Cache Update Rule}
We consider the following cache update rule: the new cache will contain the subset of the data points used in the current iteration (this is needed for the local computations), plus a random subset of the previous cached contents.
More specifically, $q/n$ rows of the new cache are  precisely the rows in $S_i^{t+1}$, and $s - q/n$ rows of the cache are sampled points from the set $C_i^t \setminus S_i^{t+1} $, uniformly at random without replacement. 
Since the permutation $\pi^t$ is picked uniformly at random, the marginal distribution of the cache contents at iteration $t+1$ given $S^{t+1}_i, ~1 \leq i \leq n$ is described as follows: $S^{t+1}_i \subset C^{t+1}_i$ and $C^{t+1}_i \setminus S^{t+1}_i$ is distributed uniformly at random in $[q] \setminus S^{t+1}_i$ without replacement.

\subsubsection{Encoding and Transmission Schemes}
We now formally describe two transmission schemes of the master node: (1) uncoded transmission and (2) coded transmission. 
In the following descriptions, we drop the iteration index $t$ (and $t+1$) for the ease of notation. 

The uncoded transmission first finds how many data rows in $S_i$ are already cached in $C_i$, i.e. $|C_i \cap S_i|$. Since, the new permutation (partitioning) is picked uniformly at random, $s/q$ fraction of the data row indices in $S_i$ are cached in $C_i$, so as $q$ gets large, we have $|C_i \cap S_i| + o(q)= \frac{q}{n}(1 - s/q)$. Thus, without coding, the master node needs to transmit $\frac{q}{n}(1 - s/q)$ data points to each of the $n$ worker nodes. The total communication rate (in data points transmitted per iteration) of the uncoded scheme is then 
\begin{align}\label{eq:r1}
R_{u} = n \times \frac{q}{n}(1 - s/q) = q (1 - s/q).
\end{align}

We now describe the coded transmission scheme.
Define the set of ``exclusive'' cache content as $\widetilde{C}_{\mathcal{I}} = \left(\cap_{i \in \mathcal{I}} C_i \right)\cap \left(\cap_{i' \in [n] \setminus \mathcal{I}} C^\complement_{i'}\right)$ that denotes the set of rows that are stored at the caches of $\mathcal{I}$, and are {\em not} stored at the caches of $[n] \setminus \mathcal{I}$. 
For each subset $\mathcal{I}$ with $|\mathcal{I}| \geq 2$, the master node will multicast $\sum_{i \in \mathcal{I}} \ba(S_i \cap \widetilde{C}_{\mathcal{I} \setminus \{ i\}})$ to the worker nodes. 
Note that in general, the matrices $\ba$'s differ in their sizes, so one has to zero-pad the shorter matrices and sum the zero-padded matrices. 
%Note that the size of such metadata is negligible compared to the size of coded messages.
Algorithm~\ref{alg:enc} provides the pseudocode of the coded encoding and transmission scheme.\footnote{Note that for each encoded data row, the master node also needs to transmit tiny metadata describing which data rows are included in the summation. We omit this detail in the description of the algorithm.}
\begin{algorithm}[h]
  \begin{algorithmic}
  \Procedure{Encoding}{$[C_i]_{i=1}^{n}$}
  \For{each $\mathcal{I} \in [n]^n$, $|\mathcal{I}|>2$}
  \State $\widetilde{C}_{\mathcal{I}} = \left(\cap_{i \in \mathcal{I}} C_i \right)\cap \left(\cap_{i' \in [n] \setminus \mathcal{I}} C^\complement_{i'}\right)$
  \State $\ell \gets \max_{i=1}^{|\mathcal{I}|} {|S_i \cap \widetilde{C}_{\mathcal{I} \setminus \{ i\}}|}$
  \For{each $i \in \mathcal{I}$}
  \State $\mathbf{B}_i[1:|S_i \cap \widetilde{C}_{\mathcal{I} \setminus \{ i\}}|,:] \gets \ba(S_i \cap \widetilde{C}_{\mathcal{I} \setminus \{ i\}})$
  \State $\mathbf{B}_i[|S_i \cap \widetilde{C}_{\mathcal{I} \setminus \{ i\}}|+1:\ell,:] \gets \mathbf{0}$
  \EndFor   
  \State broadcast $\sum_{i \in \mathcal{I}} \mathbf{B}_i$
  \EndFor
  \EndProcedure
  \end{algorithmic}
  \caption{Coded Encoding and Transmission Scheme}
\label{alg:enc}
\end{algorithm}

\subsubsection{Decoding Algorithm}
The decoding algorithm for the uncoded transmission scheme is straightforward: each worker simply takes the additional data rows that are required for the new iteration, and ignores the other data rows. 
We now describe the decoding algorithm for the coded transmission scheme.
Each worker, say worker $i$, decodes each encoded data row as follows. 
Consider an encoded data row for some $\mathcal{I}$ that contains $i$.
(All other data rows are discarded.)
Such an encoded data row must be the sum of some data row in $S_i$ and $|\mathcal{I}|-1$ data rows in $\widetilde{C}_{\mathcal{I}\setminus\{i\}}$, which are available in worker $i$ by the definition of $\widetilde{C}$. 
Hence, the worker can always subtract the data rows corresponding to $\widetilde{C}_{\mathcal{I}\setminus\{i\}}$ and decode the data row in $S_i$.

\subsection{Example}
The following example illustrates the coded shuffling scheme.

\begin{example}
Let $n = 3$. Recall that worker node $i$ needs to obtain $\ba(S_i \cap C^\complement_i)$ for the next iteration of the algorithm. Consider $i = 1$. The data rows in $S_1 \cap C^\complement_1$ are stored either exclusively in $C_2$ or $C_3$ (i.e. $\widetilde{C}_2$ or $\widetilde{C}_3$), or stored in both $C_2$ and $C_3$ (i.e. $\widetilde{C}_{2,3}$). The transmitted message consists of 4 parts:
\begin{itemize}
\item{(Part $1$)} ${M}_{\{1,2\}} = \ba(S_1 \cap \widetilde{C}_2) + \ba(S_2 \cap \widetilde{C}_1)$,
\item{(Part $2$)} ${ M}_{\{1,3\}} = \ba(S_1 \cap \widetilde{C}_3) + \ba(S_3 \cap \widetilde{C}_1)$,
\item{(Part $3$)} ${ M}_{\{2,3\}} = \ba(S_2 \cap \widetilde{C}_3) + \ba(S_3 \cap \widetilde{C}_2)$, and
\item{(Part $4$)} ${ M}_{\{1,2,3\}} = \ba(S_1 \cap \widetilde{C}_{2,3}) + \ba(S_2 \cap \widetilde{C}_{1,3}) + \ba(S_3 \cap \widetilde{C}_{1,2})$.
\end{itemize}
We show that worker node 1 can recover the data rows that it does not store or $\ba(S_1 \cap C^\complement_1)$. First, observe that node $1$ stores $S_2 \cap \widetilde{C}_1$. Thus, it can recover $\ba(S_1 \cap \widetilde{C}_2)$ using part 1 of the message since $\ba(S_1 \cap \widetilde{C}_2) = { M}_1 - \ba(S_2 \cap \widetilde{C}_1)$. Similarly, node $1$ recovers $\ba(S_1 \cap \widetilde{C}_3) = {M}_2 - \ba(S_3 \cap \widetilde{C}_1)$. Finally, from part 4 of the message, node $1$ recovers $\ba(S_1 \cap \widetilde{C}_{2,3}) = { M}_4 -  \ba(S_2 \cap \widetilde{C}_{1,3}) - \ba(S_3 \cap \widetilde{C}_{1,2})$.

\end{example}

\subsection{Main Results} \label{sec:shuffle_main_results}
We now present the main result of this section, which characterizes the communication rate of the coded scheme. Let $p = \frac{s -q/n}{q - q/n}$.   
\begin{theorem}[Coded Shuffling Rate]\label{thm1}
Coded shuffling achieves communication rate 
\begin{align}\label{eq:r2}
R_{c} =  \frac{q}{(np)^2}\left ( (1-p)^{n+1} +  (n-1)p(1-p)  - (1-p)^2 \right )
\end{align}
(in number of data rows transmitted per iteration from the master node), which is significantly smaller than $R_u$ in \eqref{eq:r1}. 
\end{theorem}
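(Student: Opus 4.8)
The plan is to count directly how many data rows the master broadcasts under the coded delivery rule, working in the regime $q\to\infty$ in which all the relevant set cardinalities concentrate on their means (the ``as $q$ gets large'' step already invoked informally in the section); decodability is established by the worked examples, so only the rate needs to be computed. Fix a shuffling step. The cache-update rule maintains the invariants $S_i\subset C_i$ and $C_i\setminus S_i$ uniform without replacement in $[q]\setminus S_i$ with $|C_i\setminus S_i| = s - q/n$, so a fixed row outside $S_i$ lies in $C_i$ with probability $p = \frac{s-q/n}{q-q/n}$. Together with the fresh uniformly random re-partition $\{S_i\}$, this yields the following picture for a fixed row $\ell$: its old owner $m$ (with $\ell\in S_m$ before the shuffle) and new owner $k$ (with $\ell\in S_k$ after) are independent and uniform on $[n]$; conditioned on $m$, the indicators $\mathbf{1}\{\ell\in C_i\}$, $i\neq m$, are asymptotically independent $\mathrm{Bernoulli}(p)$, while $\ell\in C_m$ surely. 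By symmetry, for a broadcast set $\mathcal I$ the summands $A(S_k\cap\tilde C_{\mathcal I\setminus\{k\}})$, $k\in\mathcal I$, all have the same expected row count, so the length of the (zero-padded) broadcast for $\mathcal I$ converges to the common value $\mathbb E\,|S_k\cap\tilde C_{\mathcal I\setminus\{k\}}|$ for any fixed $k\in\mathcal I$.

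Next I would evaluate that expectation. Fix $\mathcal I$ with $|\mathcal I| = j\ge 2$ and $k\in\mathcal I$. A row $\ell$ contributes exactly when: (a) $\ell$ is newly assigned to $k$, probability $1/n$; (b) its old owner $m$ lies in $\mathcal I\setminus\{k\}$, probability $\frac{j-1}{n}$ --- this is forced, since $\ell\in C_m$ and exclusivity of $\tilde C_{\mathcal I\setminus\{k\}}$ would otherwise fail, and it also guarantees $m\neq k$; (c) each of the remaining $j-2$ nodes of $\mathcal I\setminus\{k\}$ caches $\ell$, probability $p^{\,j-2}$; and (d) node $k$ and the $n-j$ nodes outside $\mathcal I$ do not cache $\ell$, probability $(1-p)^{\,n-j+1}$ (none of these $n-j+1$ nodes is $m$). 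Hence $\mathbb E\,|S_k\cap\tilde C_{\mathcal I\setminus\{k\}}| = \frac{q(j-1)}{n^2}\,p^{\,j-2}(1-p)^{\,n-j+1}$, and summing over the $\binom{n}{j}$ sets of size $j$ and over $j=2,\dots,n$ gives
\begin{align}
R_c \;=\; \frac{q}{n^2}\sum_{j=2}^{n}\binom{n}{j}(j-1)\,p^{\,j-2}(1-p)^{\,n-j+1}.
\end{align}
The closed form is then a routine binomial computation: using $\binom{n}{j}j = n\binom{n-1}{j-1}$ on the $j$-part and the binomial theorem on both parts (after peeling off the $j=0,1$ terms), the $(1-p)^{n-1}$ contributions cancel and one gets $R_c = \frac{q(1-p)}{(np)^2}\big((1-p)^n + np - 1\big)$, which equals the claimed expression because $(1-p)^{n+1} + np(1-p) - (1-p) = (1-p)^{n+1} + (n-1)p(1-p) - (1-p)^2$.

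For the strict inequality $R_c < R_u$, I would rewrite $R_u = q(1-s/q) = \frac{n-1}{n}q(1-p)$ via $1-p = \frac{n}{n-1}(1-s/q)$, divide through by $q(1-p)>0$, and reduce the claim to $(1-p)^n + np - 1 < n(n-1)p^2$ for $0<p<1$. This follows from the elementary bound $(1-p)^n \le 1 - np + \binom{n}{2}p^2$ on $[0,1]$: the function $g(p) = 1-np+\binom{n}{2}p^2-(1-p)^n$ has $g(0)=g'(0)=0$ and $g''(p)=n(n-1)\big(1-(1-p)^{n-2}\big)\ge 0$, so $g\ge 0$; then $(1-p)^n+np-1 \le \binom{n}{2}p^2 < n(n-1)p^2$.

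The step I expect to be the real obstacle is the first: making the $q\to\infty$ reduction rigorous. One must show that (i) the hypergeometric dependence in the samples $C_i\setminus S_i$ --- sampling without replacement, and jointly across nodes and across iterations --- can be replaced by product-$\mathrm{Bernoulli}(p)$ coordinates with only $o(q)$ error in all cardinalities, and (ii) the random broadcast length $\max_{k\in\mathcal I}|S_k\cap\tilde C_{\mathcal I\setminus\{k\}}|$ concentrates on the single common mean rather than its maximum exceeding it; both are standard concentration arguments but need care. Everything after that is deterministic algebra, which I would sanity-check at $s=q/n$ (so $p=0$): the numerator of $R_c$ is $\frac{n(n-1)}{2}p^2 + O(p^3)$, so $R_c \to \frac{n-1}{2n}q$, matching the $50\%$ saving of the two-worker toy example.
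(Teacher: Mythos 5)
Your proposal is correct and follows essentially the same route as the paper's proof: compute $\PP\bigl(\ell \in \tilde{C}_{\mathcal{I}\setminus\{k\}}\bigr)$ by conditioning on the row's old owner, invoke the law of large numbers to convert probabilities into set cardinalities, sum $\frac{q(j-1)}{n^2}p^{j-2}(1-p)^{n-j+1}$ over the $\binom{n}{j}$ subsets, and close the binomial sum (your identity via the binomial mean and the paper's via differentiating a generating function yield the same expression, as your algebraic check confirms). The only additions beyond the paper are welcome ones: an explicit convexity argument establishing $R_c < R_u$, which the paper asserts without proof, and an honest flag that the concentration step replacing hypergeometric sampling by product Bernoulli$(p)$ is where rigor is actually needed.
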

The reduction in communication rate is illustrated in Fig.~\ref{fig:rate} for $n = 50$ and $q = 1000$ as a function of $s/q$, where $1/n \leq s/q \leq 1$. 
\begin{figure}
\centering
\includegraphics[width=0.5\textwidth]{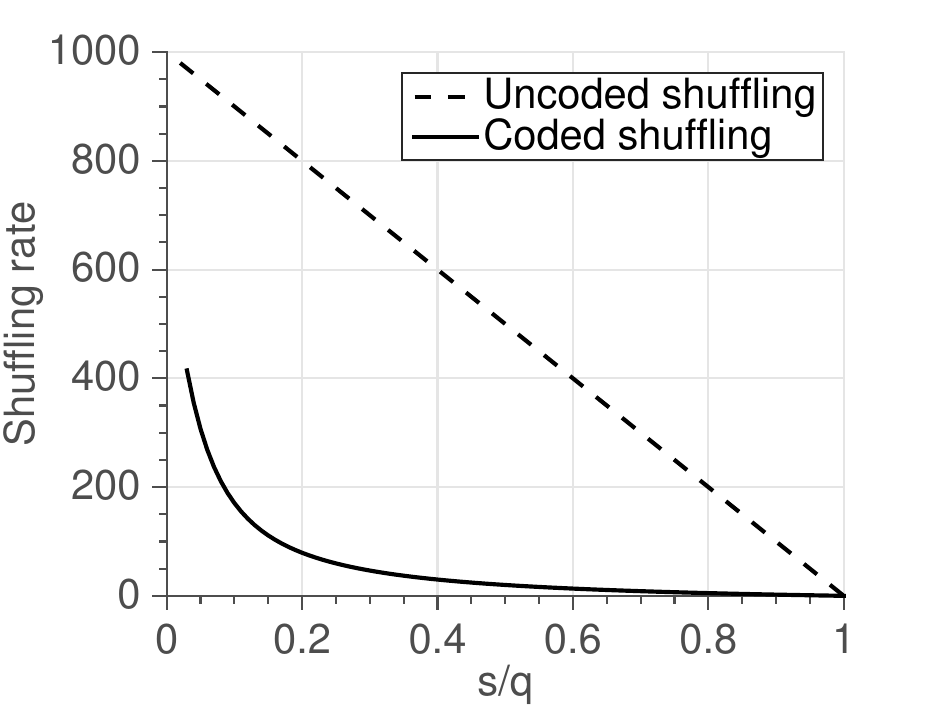}
\caption{\footnotesize{\textbf{The achievable rates of coded and uncoded shuffling schemes.} This figure shows the achievable rates of coded and uncoded schemes versus the cache size for parallel stochastic gradient descent algorithm.\label{fig:rate}}}
\end{figure}
For instance, when $s/q = 0.1$, the communication overhead for data-shuffling is reduced by more than $81\%$.
Thus, at a very low storage overhead for caching, the algorithm can be significantly accelerated.

Before we present the proof of the theorem, we briefly compare our main result with similar results shown in\,\cite{MN1, songzeliisit16}. 
Our coded shuffling algorithm is related to the coded caching problem\,\cite{MN1}, since one can design the right cache update rule to reduce the communication rate for an unknown demand or permutation of the data rows. A key difference though is that the coded shuffling algorithm is run over many iterations of the machine learning algorithm. Thus, the right cache update rule is required to guarantee the opportunity of coded transmission at every iteration. Furthermore, the coded shuffling problem has some connections to coded MapReduce\,\cite{songzeliisit16} as both algorithms mitigate the communication bottlenecks in distributed computation and machine learning. However, coded shuffling enables coded transmission of raw data by leveraging the extra memory space available at each node, while coded MapReduce enables coded transmission of processed data in the shuffling phase of the MapReduce algorithm by cleverly introducing redundancy in the computation of the mappers.

We now prove Theorem~\ref{thm1}.
\begin{proof}
To find the transmission rate of the coded scheme we first need to find the cardinality of sets $S^{t+1}_i \cap \widetilde{C}^t_{\mathcal{I}}$ for $\mathcal{I} \subset [n]$ and $i \notin \mathcal{I}$. To this end, we first find the probability that a random data row, $\bsr$, belongs to $\widetilde{C}^t_{\mathcal{I}}$. Denote this probability by $\PP(\bsr \in \widetilde{C}^t_{\mathcal{I}})$. Recall that the cache content distribution at iteration $t$: $q/n$ rows of cache $j$ are stored with $S^t_j$ and the other $s - q/n$ rows are stored uniformly at random. Thus, we can compute $\PP(\bsr \in \widetilde{C}^t_{\mathcal{I}})$ as follows. 
\begin{align}
&\PP(\bsr \in \widetilde{C}^t_{\mathcal{I}})\nonumber\\ \label{1}
&= 
\sum_{i=1}^n \PP(\bsr \in \widetilde{C}^t_{\mathcal{I}} | \bsr \in S^t_i) \PP(\bsr \in S^t_i) \\\label{2}
&= \sum_{i=1}^n \frac 1n \PP(\bsr \in \widetilde{C}^t_{\mathcal{I}} | \bsr \in S^t_i) \\ \label{3}
&= \sum_{i \in \mathcal{I}} \frac 1n \PP(\bsr \in \widetilde{C}^t_{\mathcal{I}}| \bsr \in S^t_i) \\ \label{4}
&= \sum_{i \in \mathcal{I}} \frac 1n \left ( \frac{s - q/n}{q - q/n}\right )^{|\mathcal{I}| - 1}\left( 1 - \frac{s - q/n}{q - q/n}\right)^{n-|\mathcal{I}|}\\  \label{5}
&= \frac{|\mathcal{I}|}{n}p^{|\mathcal{I}| - 1}(1-p)^{n - |\mathcal{I}|}.
\end{align}    
\eqref{1} is by the law of total probability. \eqref{2} is by the fact that $\bsr$ is chosen randomly. To see \eqref{3}, note that $\PP(\bsr \in \widetilde{C}^t_{\mathcal{I}}|\bsr \in S_i^t,i \notin \mathcal{I}) = 0$. Thus, the summation can be written only on the indices of $\mathcal{I}$. We now explain \eqref{4}. Given that $\bsr$ belongs to $S_i^t$, and $i \in \mathcal{I}$, then $\bsr \in C_i$ with probability 1. The other $|\mathcal{I}| - 1$ caches with indices in $\mathcal{I} \setminus \{ i \}$ contain $\bsr$ with probability $\frac{s - q/n}{q - q/n}$ independently. Further, the caches with indices in $[n] \setminus \mathcal{I}$ do not contain $\bsr$ with probability $1 - \frac{s - q/n}{q - q/n}$. By defining $p \defeq \frac{s - q/n}{q - q/n}$, we have \eqref{5}.

We now find the cardinality of $S^{t+1}_i \cap \widetilde{C}^t_{\mathcal{I}}$ for $\mathcal{I} \subset [n]$ and $i \notin \mathcal{I}$. Note that $|S^{t+1}_i| = q/n$. Thus, as $q$ gets large (and $n$ remains sub-linear in $q$), by the law of large numbers, 
\begin{align}
|S^{t+1}_i \cap \widetilde{C}^t_{\mathcal{I}}| = \frac qn \times \frac{|\mathcal{I}|}{n}p^{|\mathcal{I}| - 1}(1-p)^{n - |\mathcal{I}|} + o(q).
\end{align}
Recall that for each subset $\mathcal{I}$ with $|\mathcal{I}| \geq 2$, the master node will send $\sum_{i \in \mathcal{I}} \ba(S_i \cap \widetilde{C}_{\mathcal{I} \setminus \{i\}})$ . Thus, the total rate of coded transmission is 
\begin{equation}\label{rate1}
R_c = \sum_{i =2}^n {n \choose i} \frac{q}{n} \frac{i-1}{n}p^{i - 2}(1-p)^{n - (i-1)}.
\end{equation}
To complete the proof, we simplify the above expression. Let $x = \frac{p}{1-p}$. 
Taking derivative with respect to $x$ from both sides of the equality $\sum_{i=1}^n {n \choose i} x^{i-1} = \frac{1}{x} \left [ (1+x)^n - 1 \right ]$, we have 
\begin{equation}\label{eq:comb}
\sum_{i=2}^n {n \choose i} (i-1) x^{i-2} = \frac{1 + (1+x)^{n-1}(nx - x -1)}{x^2}.
\end{equation}
Using \eqref{eq:comb} in \eqref{rate1} completes the proof.   
\end{proof}

\begin{corollary}\label{cor:2}
Consider the case that the cache sizes are just enough to store the data required for processing; that is $s = q/n$. Then, $R_c = \frac 12 R_u$. Thus, one gets a factor 2 reduction gain in communication rate by exploiting coded caching.
\end{corollary}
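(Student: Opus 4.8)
The plan is to specialize the two rate expressions to the regime $s = q/n$, which is exactly the regime $p = \frac{s - q/n}{q - q/n} = 0$, and then compare them. On the uncoded side this is immediate: \eqref{eq:r1} gives $R_u = q(1 - s/q) = q(1 - 1/n) = q(n-1)/n$.

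On the coded side I would work from the un-simplified series \eqref{rate1} rather than from the closed form \eqref{eq:r2}, because at $p = 0$ the closed form is a $0/0$ indeterminate. In
\[
R_c = \sum_{i=2}^{n} \binom{n}{i}\,\frac{q}{n}\,\frac{i-1}{n}\, p^{\,i-2}(1-p)^{\,n-i+1},
\]
every term with $i \geq 3$ contains the factor $p^{\,i-2}$ with a strictly positive exponent and hence vanishes when $p = 0$; only the $i = 2$ term survives, contributing $\binom{n}{2}\cdot\frac{q}{n}\cdot\frac{1}{n}\cdot 1\cdot 1 = \frac{q(n-1)}{2n}$. Comparing with $R_u$ yields $R_c = \tfrac12 R_u$, as claimed.

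For completeness I would also record the route through the closed form \eqref{eq:r2}: the bracketed factor $(1-p)^{n+1} + (n-1)p(1-p) - (1-p)^2$ and its first derivative both vanish at $p = 0$, while its second derivative there equals $n(n-1)$; hence the bracket equals $\tfrac12 n(n-1)p^2 + o(p^2)$, and dividing by $(np)^2$ sends $R_c$ to $\frac{q(n-1)}{2n}$, the same value. Either way the statement is essentially a plug-in computation, so there is no genuine obstacle; the only thing to watch is the indeterminate form in \eqref{eq:r2}, which is precisely why passing through the combinatorial sum \eqref{rate1} is the cleaner path. The intuition behind the factor $\tfrac12$ is that when $s = q/n$ every worker's cache holds exactly its current mini-batch, so the exclusive-cache sets $\tilde C_{\mathcal I}$ are empty for all $|\mathcal I| \geq 2$; only the pairwise multicast messages carry data, one per pair of workers and each of size $\approx q/n^2$, for a total of $\binom{n}{2}q/n^2 = \tfrac12\, q(n-1)/n$.
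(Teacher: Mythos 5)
Your proposal is correct and matches the paper's argument in substance: the paper proves the corollary by evaluating $\lim_{p\to 0} R_c$ from the closed form \eqref{eq:r2}, obtaining $R_c = q(1-s/q)\frac{1}{1+ns/q} = R_u/2$ at $s=q/n$, which is exactly your secondary route. Your primary route through the unsimplified sum \eqref{rate1}, where only the $i=2$ term survives at $p=0$, is a tidier way to reach the same value since it sidesteps the $0/0$ indeterminacy, but it is the same plug-in computation.
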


Note that when $s = q/n$, $p = 0$. Finding the limit $\lim_{p \to 0} R_c$ in \eqref{eq:r2}, after some manipulations, one calculates 
\begin{align}
R_c = q\left(1 - \frac sq\right)\frac{1}{1 + ns/q} = R_u /2,
\end{align}
which shows Corollary~\ref{cor:2}.

\begin{corollary} \label{corollary:shuffling_scale}
Consider the regime of interest where $n$, $s$, and $q$ get large, and $s/ q \to c > 0 $ and $n / q \to 0$. Then, 
\begin{align}\label{eq:r3}
R_{c} \to q\left(1 - \frac sq\right)\frac{1}{ns/q} = \frac{R_u}{ns/q}
\end{align}
Thus, using coding, the communication rate is reduced by $\Theta(n)$.
\end{corollary}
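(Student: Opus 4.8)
\textbf{Proof proposal for Corollary~\ref{corollary:shuffling_scale}.}
The plan is to extract the leading order of $R_c$ directly from the closed form \eqref{eq:r2} proven in Theorem~\ref{thm1}. First I would simplify the bracketed quantity by grouping the last two terms: since
\[
(n-1)p(1-p) - (1-p)^2 = (1-p)\big((n-1)p - (1-p)\big) = (1-p)(np-1),
\]
we get the clean splitting
\[
R_c = \frac{q(1-p)(np-1)}{(np)^2} \;+\; \frac{q(1-p)^{n+1}}{(np)^2},
\]
a ``main term'' plus a ``tail term''. Note also the exact identity $1-p = \frac{q-s}{q-q/n} = \frac{1-s/q}{1-1/n}$, which will make the parameter asymptotics transparent.

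Next I would track $p=\frac{s-q/n}{q-q/n}$ in the stated regime. Writing $\sigma=s/q$, we have $p=\frac{\sigma-1/n}{1-1/n}$ and $1-p=\frac{1-\sigma}{1-1/n}$, so under $\sigma\to c>0$, $n\to\infty$ one has $p\to c$, and more precisely $p/\sigma\to1$ and $(1-p)/(1-\sigma)\to1$; in particular $np\to\infty$. I would then show the tail term is negligible relative to the main term: their ratio is exactly $\frac{(1-p)^{n}}{np-1}$, which tends to $0$ because $(1-p)^{n}$ decays exponentially (as $p\to c>0$) while $np-1\to\infty$. The point worth stressing is that the factor $q$ cancels in this ratio, so once the hypothesis $n/q\to0$ guarantees that the law-of-large-numbers step underlying \eqref{eq:r2} is valid, no further control of the $n$--$q$ relationship is needed here.

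Since the tail term is negligible and $\frac{np-1}{(np)^2}=\frac{1}{np}\big(1-\frac{1}{np}\big)\sim\frac{1}{np}$, this yields $R_c\sim\frac{q(1-p)}{np}$. Substituting the asymptotics of $p$ and recalling $R_u=q(1-s/q)$ from \eqref{eq:r1},
\[
R_c \sim \frac{q(1-p)}{np} \sim \frac{q(1-\sigma)}{n\sigma} = \frac{q(1-s/q)}{ns/q} = \frac{R_u}{ns/q},
\]
which is \eqref{eq:r3}. Because $s/q\to c$ is bounded away from $0$, $ns/q=\Theta(n)$, giving the claimed $\Theta(n)$ reduction.

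The main thing requiring care is bookkeeping rather than any deep estimate: (i) in the second step one must check that the $O(1/n)$ discrepancy between $p$ and $s/q$ genuinely washes out under the joint limit (the exact formula $1-p=\frac{1-s/q}{1-1/n}$ makes this immediate for $1-p$, and $p/\sigma\to1$ follows from $ns/q\to\infty$); and (ii) throughout, ``$\to$'' in the statement must be read as convergence of \emph{ratios} to $1$, since $R_c$, $R_u$, and $R_u/(ns/q)$ all diverge with $q$ — i.e.\ the displayed limit means $R_c\big/\big(q(1-s/q)/(ns/q)\big)\to1$.
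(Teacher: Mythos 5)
Your derivation is correct: the grouping $(n-1)p(1-p)-(1-p)^2=(1-p)(np-1)$, the exponential decay of the $(1-p)^{n+1}$ term when $p\to c>0$, and the substitution $p\sim s/q$, $1-p\sim 1-s/q$ together give exactly \eqref{eq:r3}, and your remark that the limit must be read as convergence of ratios is a worthwhile clarification. The paper states this corollary without proof (its only worked computation is the degenerate case $p\to 0$ of Corollary~\ref{cor:2}), and your direct asymptotic evaluation of \eqref{eq:r2} is precisely the argument the authors leave implicit.
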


\begin{remark}[The advantage of using multicasting over unicasting]\label{remark_tree}
\begin{figure}[h]
\centering
\includegraphics[width=.5\textwidth]{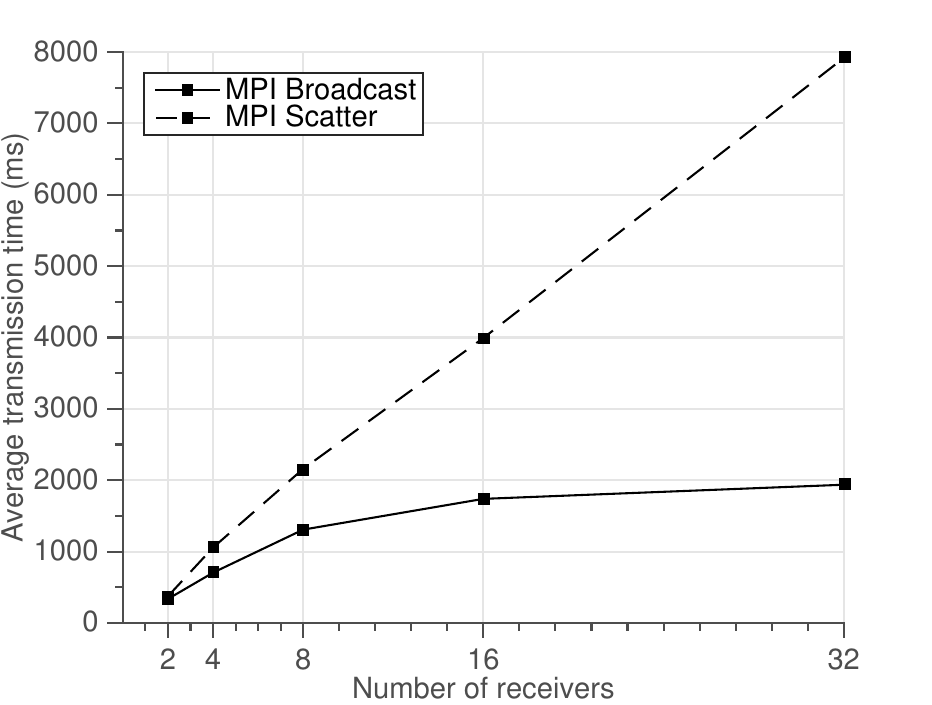}
\caption{\footnotesize{\textbf{Gains of multicasting over unicasting in distributed systems.} 
We measure the time taken for a data block of size of $4.15$ MB to be transmitted to a targeted number of workers on an Amazon EC2 cluster, and compare the average transmission time taken with Message Passing Interface (MPI) scatter (unicast) and that with MPI broadcast. 
Observe that the average transmission time increases linearly as the number of receivers increases, but with MPI broadcast, the average transmission time increases logarithmically.}
}
\label{fig:broadcast}
\end{figure}
It is reasonable to assume that $\gamma(n) \simeq n$ for wireless architecture that is of great interest with the emergence of wireless data centers, e.g. \cite{wirelessdc,zhu2014}, and mobile computing platforms \cite{mobilecomputing}.
However, still in many applications, the network topology is based on point-to-point communication, and the multicasting opportunity is not fully available, i.e., $\gamma(n) < n$. 
For these general cases, we have to renormalize the communication cost of coded shuffling since we have assumed that $\gamma(n) = n$ in our results.
For instance, in the regime considered in Corollary \ref{corollary:shuffling_scale}, the renormalized communication cost of coded shuffling $R^\gamma_c$ given $\gamma(n)$ is 
\begin{align}
R^\gamma_c = \frac{n}{\gamma(n)} R_c \rightarrow \frac{R_u}{\gamma(n)s/q}.
\end{align}
Thus, the communication cost of coded shuffling is smaller than uncoded shuffling if $\gamma(n) > q/s$. Note that $s/q$ is the fraction of the data matrix that can be stored at each worker's cache. Thus, in the regime of interest where $s/q$ is a constant independent of $n$, and $\gamma(n)$ scales with $n$, the reduction gain of coded shuffling in communication cost is still unbounded and increasing in $n$.  

We emphasize that even in point-to-point communication networks, \emph{multicasting the same message to multiple nodes is significantly faster than unicasting different message (of the same size) to multiple nodes}, i.e., $\gamma(n) \gg 1$, justifying the advantage of using coded shuffling. 
For instance, the MPI broadcast API (\texttt{MPI\_Bcast}) utilizes a tree multicast algorithm, which achieves $\gamma(n) = \Theta\left(\frac{n}{\log {n}}\right)$. 
Shown in Fig.~\ref{fig:broadcast} is the time taken for a data block to be transmitted to an increasing number of workers on an Amazon EC2 cluster, which consists of a point-to-point communication network.
We compare the average transmission time taken with MPI scatter (unicast) and that with MPI broadcast. 
Observe that the average transmission time increases linearly as the number of receivers increases, but with MPI broadcast, the average transmission time increases logarithmically.
\end{remark}

\section{Conclusion}\label{sec:conclusion}  
In this paper, we have explored the power of coding in order to make distributed algorithms robust to a variety of sources of ``system noise'' such as stragglers and communication bottlenecks. 
We propose a novel \emph{Coded Computation} framework that can significantly speed up existing distributed algorithms, by introducing redundancy through codes into the computation.
Further, we propose \emph{Coded Shuffling} that can significantly reduce the heavy price of data-shuffling, which is required for achieving high statistical efficiency in distributed machine learning algorithms. 
Our preliminary experimental results validate the power of our proposed schemes in effectively curtailing the negative effects of system bottlenecks, and attaining significant speedups of up to $40\%$, compared to the current state-of-the-art methods.

There exists a whole host of theoretical and practical open problems related to the results of this paper.
For coded computation, instead of the MDS codes, one could achieve different tradeoffs by employing another class of codes. 
Then, although matrix multiplication is one of the most basic computational blocks in many analytics, it would be interesting to leverage coding for a broader class of distributed algorithms.

For coded shuffling, convergence analysis of distributed machine learning algorithms under shuffling is not well understood.  
As we observed in the experiments, shuffling significantly reduces the number of iterations required to achieve a target reliability, but missing is a rigorous analysis that compares the convergence performances of algorithms with shuffling or without shuffling. Further, the trade-offs between bandwidth, storage, and the statistical efficiency of the distributed algorithms are not well understood. 
Moreover, it is not clear how far our achievable scheme, which achieves a bandwidth reduction gain of $\Theta(\frac{1}{n})$, is from the fundamental limit of communication rate for coded shuffling. 
Therefore, finding an information-theoretic lower bound on the rate of coded shuffling is another interesting open problem.

\bibliographystyle{IEEEtran}
\bibliography{ref}

% Generated by IEEEtran.bst, version: 1.14 (2015/08/26)
\begin{thebibliography}{10}
\providecommand{\url}[1]{#1}
\csname url@samestyle\endcsname
\providecommand{\newblock}{\relax}
\providecommand{\bibinfo}[2]{#2}
\providecommand{\BIBentrySTDinterwordspacing}{\spaceskip=0pt\relax}
\providecommand{\BIBentryALTinterwordstretchfactor}{4}
\providecommand{\BIBentryALTinterwordspacing}{\spaceskip=\fontdimen2\font plus
\BIBentryALTinterwordstretchfactor\fontdimen3\font minus
  \fontdimen4\font\relax}
\providecommand{\BIBforeignlanguage}[2]{{%
\expandafter\ifx\csname l@#1\endcsname\relax
\typeout{** WARNING: IEEEtran.bst: No hyphenation pattern has been}%
\typeout{** loaded for the language `#1'. Using the pattern for}%
\typeout{** the default language instead.}%
\else
\language=\csname l@#1\endcsname
\fi
#2}}
\providecommand{\BIBdecl}{\relax}
\BIBdecl

\bibitem{itjournal}
K.~Lee, M.~Lam, R.~Pedarsani, D.~Papailiopoulos, and K.~Ramchandran, ``Speeding
  up distributed machine learning using codes,'' \emph{IEEE Transactions on
  Information Theory}, vol.~PP, no.~99, pp. 1--1, 2017.

\bibitem{lee2015nips}
------, ``Speeding up distributed machine learning using codes,'' Presented at
  2015 Neural Information Processing Systems (NIPS): Workshop on Machine
  Learning Systems, December 2015.

\bibitem{lee2016isit}
------, ``Speeding up distributed machine learning using codes,'' in
  \emph{Proc. of IEEE International Symposium on Information Theory (ISIT)},
  July 2016, pp. 1143--1147.

\bibitem{apache_spark}
\BIBentryALTinterwordspacing
M.~Zaharia, M.~Chowdhury, M.~J. Franklin, S.~Shenker, and I.~Stoica, ``Spark:
  Cluster computing with working sets,'' in \emph{Proc. 2nd {USENIX} Workshop
  on Hot Topics in Cloud Computing (HotCloud)}, 2010. [Online]. Available:
  \url{https://www.usenix.org/conference/hotcloud-10/spark-cluster-computing-working-sets}
\BIBentrySTDinterwordspacing

\bibitem{mapreduce}
\BIBentryALTinterwordspacing
J.~Dean and S.~Ghemawat, ``{MapReduce}: Simplified data processing on large
  clusters,'' in \emph{Proc. 6th Symposium on Operating System Design and
  Implementation (OSDI)}, 2004, pp. 137--150. [Online]. Available:
  \url{http://www.usenix.org/events/osdi04/tech/dean.html}
\BIBentrySTDinterwordspacing

\bibitem{40801}
\BIBentryALTinterwordspacing
J.~Dean and L.~A. Barroso, ``The tail at scale,'' \emph{Commun. {ACM}},
  vol.~56, no.~2, pp. 74--80, 2013. [Online]. Available:
  \url{http://doi.acm.org/10.1145/2408776.2408794}
\BIBentrySTDinterwordspacing

\bibitem{dimakis2010network}
A.~G. Dimakis, P.~B. Godfrey, Y.~Wu, M.~J. Wainwright, and K.~Ramchandran,
  ``Network coding for distributed storage systems,'' \emph{IEEE Transactions
  on Information Theory}, vol.~56, no.~9, pp. 4539--4551, 2010.

\bibitem{rashmi2011optimal}
K.~V. Rashmi, N.~B. Shah, and P.~V. Kumar, ``Optimal exact-regenerating codes
  for distributed storage at the {MSR} and {MBR} points via a product-matrix
  construction,'' \emph{IEEE Transactions on Information Theory}, vol.~57,
  no.~8, pp. 5227--5239, 2011.

\bibitem{suh2011exact}
C.~Suh and K.~Ramchandran, ``Exact-repair {MDS} code construction using
  interference alignment,'' \emph{IEEE Transactions on Information Theory},
  vol.~57, no.~3, pp. 1425--1442, 2011.

\bibitem{tamo2011mds}
I.~Tamo, Z.~Wang, and J.~Bruck, ``{MDS} array codes with optimal rebuilding,''
  in \emph{Proc. of IEEE International Symposium on Information Theory (ISIT)},
  2011, pp. 1240--1244.

\bibitem{cadambe2011optimal}
V.~R. Cadambe, C.~Huang, S.~A. Jafar, and J.~Li, ``Optimal repair of {MDS}
  codes in distributed storage via subspace interference alignment,''
  \emph{arXiv preprint arXiv:1106.1250}, 2011.

\bibitem{papailiopoulos2011repair}
D.~S. Papailiopoulos, A.~G. Dimakis, and V.~R. Cadambe, ``Repair optimal
  erasure codes through {H}adamard designs,'' in \emph{Proc. 49th Annual
  Allerton Conference on Communication, Control, and Computing
  (Allerton)}.\hskip 1em plus 0.5em minus 0.4em\relax IEEE, 2011, pp.
  1382--1389.

\bibitem{gopalan2011locality}
P.~Gopalan, C.~Huang, H.~Simitci, and S.~Yekhanin, ``On the locality of
  codeword symbols,'' \emph{IEEE Transactions on Information Theory}, vol.~58,
  no.~11, pp. 6925--6934, 2011.

\bibitem{oggier2011self}
F.~Oggier and A.~Datta, ``Self-repairing homomorphic codes for distributed
  storage systems,'' in \emph{Proc. of IEEE INFOCOM}.\hskip 1em plus 0.5em
  minus 0.4em\relax IEEE, 2011, pp. 1215--1223.

\bibitem{papailiopoulos2012simple}
D.~S. Papailiopoulos, J.~Luo, A.~G. Dimakis, C.~Huang, and J.~Li, ``Simple
  regenerating codes: Network coding for cloud storage,'' in \emph{Proc. of
  IEEE INFOCOM}.\hskip 1em plus 0.5em minus 0.4em\relax IEEE, 2012, pp.
  2801--2805.

\bibitem{han2007reliable}
J.~Han and L.~A. Lastras-Montano, ``Reliable memories with subline accesses,''
  in \emph{Proc. of IEEE International Symposium on Information Theory
  (ISIT)}.\hskip 1em plus 0.5em minus 0.4em\relax IEEE, 2007, pp. 2531--2535.

\bibitem{huang2007pyramid}
C.~Huang, M.~Chen, and J.~Li, ``Pyramid codes: Flexible schemes to trade space
  for access efficiency in reliable data storage systems,'' in \emph{Proc. 6th
  IEEE International Symposium on Network Computing and Applications
  (NCA)}.\hskip 1em plus 0.5em minus 0.4em\relax IEEE, 2007, pp. 79--86.

\bibitem{papailiopoulos2012locally}
D.~S. Papailiopoulos and A.~G. Dimakis, ``Locally repairable codes,'' in
  \emph{Proc. of IEEE International Symposium on Information Theory
  (ISIT)}.\hskip 1em plus 0.5em minus 0.4em\relax IEEE, 2012, pp. 2771--2775.

\bibitem{kamath2012codes}
G.~M. Kamath, N.~Prakash, V.~Lalitha, and P.~V. Kumar, ``Codes with local
  regeneration,'' \emph{arXiv preprint arXiv:1211.1932}, 2012.

\bibitem{rawat2012optimal}
A.~S. Rawat, O.~O. Koyluoglu, N.~Silberstein, and S.~Vishwanath, ``Optimal
  locally repairable and secure codes for distributed storage systems,''
  \emph{IEEE Transactions on Information Theory}, vol.~60, no.~1, pp. 212--236,
  Jan. 2014.

\bibitem{prakash2012optimal}
N.~Prakash, G.~M. Kamath, V.~Lalitha, and P.~V. Kumar, ``Optimal linear codes
  with a local-error-correction property,'' in \emph{Proc. of IEEE
  International Symposium on Information Theory (ISIT)}.\hskip 1em plus 0.5em
  minus 0.4em\relax IEEE, 2012, pp. 2776--2780.

\bibitem{silberstein2012error}
N.~Silberstein, A.~Singh~Rawat, and S.~Vishwanath, ``Error resilience in
  distributed storage via rank-metric codes,'' \emph{CoRR}, vol. abs/1202.0800,
  2012.

\bibitem{huang2012erasure}
C.~Huang, H.~Simitci, Y.~Xu, A.~Ogus, B.~Calder, P.~Gopalan, J.~Li, and
  S.~Yekhanin, ``Erasure coding in {W}indows {A}zure {S}torage,'' in
  \emph{Proc. of USENIX Annual Technical Conference (ATC)}, Jun. 2012.

\bibitem{sathiamoorthy2013xoring}
M.~Sathiamoorthy, M.~Asteris, D.~Papailiopoulos, A.~G. Dimakis, R.~Vadali,
  S.~Chen, and D.~Borthakur, ``{XOR}ing elephants: Novel erasure codes for big
  data,'' in \emph{Proc. VLDB Endowment}, vol.~6, no.~5.\hskip 1em plus 0.5em
  minus 0.4em\relax VLDB Endowment, 2013, pp. 325--336.

\bibitem{rashmi2013hotstorage}
K.~V. Rashmi, N.~B. Shah, D.~Gu, H.~Kuang, D.~Borthakur, and K.~Ramchandran,
  ``A solution to the network challenges of data recovery in erasure-coded
  distributed storage systems: A study on the {F}acebook warehouse cluster,''
  in \emph{Proc. of USENIX HotStorage}, Jun. 2013.

\bibitem{rashmi2014hitchhiker}
K.~Rashmi, N.~B. Shah, D.~Gu, H.~Kuang, D.~Borthakur, and K.~Ramchandran, ``A
  hitchhiker's guide to fast and efficient data reconstruction in erasure-coded
  data centers,'' in \emph{Proc. ACM conference on SIGCOMM}.\hskip 1em plus
  0.5em minus 0.4em\relax ACM, 2014, pp. 331--342.

\bibitem{Mantri}
\BIBentryALTinterwordspacing
G.~Ananthanarayanan, S.~Kandula, A.~G. Greenberg, I.~Stoica, Y.~Lu, B.~Saha,
  and E.~Harris, ``Reining in the outliers in {Map-Reduce} clusters using
  {Mantri},'' in \emph{Proc. 9th {USENIX} Symposium on Operating Systems Design
  and Implementation (OSDI)}, 2010, pp. 265--278. [Online]. Available:
  \url{http://www.usenix.org/events/osdi10/tech/full_papers/Ananthanarayanan.pdf}
\BIBentrySTDinterwordspacing

\bibitem{zaharia_late}
\BIBentryALTinterwordspacing
M.~Zaharia, A.~Konwinski, A.~D. Joseph, R.~H. Katz, and I.~Stoica, ``Improving
  {MapReduce} performance in heterogeneous environments,'' in \emph{Proc. 8th
  {USENIX} Symposium on Operating Systems Design and Implementation (OSDI)},
  2008, pp. 29--42. [Online]. Available:
  \url{http://www.usenix.org/events/osdi08/tech/full_papers/zaharia/zaharia.pdf}
\BIBentrySTDinterwordspacing

\bibitem{agarwal2011distributed}
\BIBentryALTinterwordspacing
A.~Agarwal and J.~C. Duchi, ``Distributed delayed stochastic optimization,'' in
  \emph{Proc. 25th Annual Conference on Neural Information Processing Systems
  (NIPS)}, 2011, pp. 873--881. [Online]. Available:
  \url{http://papers.nips.cc/paper/4247-distributed-delayed-stochastic-optimization}
\BIBentrySTDinterwordspacing

\bibitem{HogWild!}
B.~Recht, C.~Re, S.~Wright, and F.~Niu, ``{Hogwild}: A lock-free approach to
  parallelizing stochastic gradient descent,'' in \emph{Proc. 25th Annual
  Conference on Neural Information Processing (NIPS)}, 2011, pp. 693--701.

\bibitem{ananthanarayanan2013effective}
\BIBentryALTinterwordspacing
G.~Ananthanarayanan, A.~Ghodsi, S.~Shenker, and I.~Stoica, ``Effective
  straggler mitigation: Attack of the clones,'' in \emph{Proc. 10th {USENIX}
  Symposium on Networked Systems Design and Implementation (NSDI)}, 2013, pp.
  185--198. [Online]. Available:
  \url{https://www.usenix.org/conference/nsdi13/technical-sessions/presentation/ananthanarayanan}
\BIBentrySTDinterwordspacing

\bibitem{6736597}
\BIBentryALTinterwordspacing
N.~B. Shah, K.~Lee, and K.~Ramchandran, ``When do redundant requests reduce
  latency?'' in \emph{Proc. 51st Annual Allerton Conference on Communication,
  Control, and Computing}, 2013, pp. 731--738. [Online]. Available:
  \url{http://dx.doi.org/10.1109/Allerton.2013.6736597}
\BIBentrySTDinterwordspacing

\bibitem{wang2014efficient}
D.~Wang, G.~Joshi, and G.~W. Wornell, ``Efficient task replication for fast
  response times in parallel computation,'' in \emph{Proc. of {ACM}
  {SIGMETRICS}}, 2014, pp. 599--600.

\bibitem{mor_exact_analysis}
K.~Gardner, S.~Zbarsky, S.~Doroudi, M.~Harchol{-}Balter, and E.~Hyyti{\"{a}},
  ``Reducing latency via redundant requests: Exact analysis,'' in \emph{Proc.
  of {ACM} {SIGMETRICS}}, 2015, pp. 347--360.

\bibitem{chaubeyreplicated}
\BIBentryALTinterwordspacing
M.~Chaubey and E.~Saule, ``Replicated data placement for uncertain
  scheduling,'' in \emph{Proc. of {IEEE} International Parallel and Distributed
  Processing Symposium Workshop (IPDPS)}, 2015, pp. 464--472. [Online].
  Available: \url{http://dx.doi.org/10.1109/IPDPSW.2015.50}
\BIBentrySTDinterwordspacing

\bibitem{rr_w_cancellation_cost}
K.~Lee, R.~Pedarsani, and K.~Ramchandran, ``On scheduling redundant requests
  with cancellation overheads,'' in \emph{Proc. 53rd Annual Allerton conference
  on Communication, Control, and Computing}, Oct. 2015.

\bibitem{joshitompecs}
\BIBentryALTinterwordspacing
G.~Joshi, E.~Soljanin, and G.~Wornell, ``Efficient redundancy techniques for
  latency reduction in cloud systems,'' \emph{ACM Trans. Model. Perform. Eval.
  Comput. Syst.}, vol.~2, no.~2, pp. 12:1--12:30, Apr. 2017. [Online].
  Available: \url{http://doi.acm.org/10.1145/3055281}
\BIBentrySTDinterwordspacing

\bibitem{huang2012codes}
L.~Huang, S.~Pawar, H.~Zhang, and K.~Ramchandran, ``Codes can reduce queueing
  delay in data centers,'' in \emph{Information Theory Proceedings (ISIT), 2012
  IEEE International Symposium on}, July 2012, pp. 2766--2770.

\bibitem{mdsqueuejournal}
K.~Lee, N.~B. Shah, L.~Huang, and K.~Ramchandran, ``The {MDS} queue: Analysing
  the latency performance of erasure codes,'' \emph{IEEE Transactions on
  Information Theory}, vol.~63, no.~5, pp. 2822--2842, May 2017.

\bibitem{joshi2014}
G.~Joshi, Y.~Liu, and E.~Soljanin, ``On the delay-storage trade-off in content
  download from coded distributed storage systems,'' \emph{IEEE Journal on
  Selected Areas in Communications}, vol.~32, no.~5, pp. 989--997, 2014.

\bibitem{sun2015provably}
Y.~Sun, Z.~Zheng, C.~E. Koksal, K.-H. Kim, and N.~B. Shroff, ``Provably delay
  efficient data retrieving in storage clouds,'' \emph{arXiv:1501.01661}, 2015.

\bibitem{kadheallerton}
S.~Kadhe, E.~Soljanin, and A.~Sprintson, ``When do the availability codes make
  the stored data more available?'' in \emph{Proc. of 53rd Annual Allerton
  Conference on Communication, Control, and Computing (Allerton)}, Sept 2015,
  pp. 956--963.

\bibitem{kadheisit}
------, ``Analyzing the download time of availability codes,'' in \emph{Proc.
  of IEEE International Symposium on Information Theory (ISIT)}, June 2015, pp.
  1467--1471.

\bibitem{nuwan2016anytime}
N.~Ferdinand and S.~Draper, ``Anytime coding for distributed computation,''
  Presented at the 54th Annual Allerton conference on Communication, Control,
  and Computing, Monticello, IL, 2016.

\bibitem{dutta2016short}
S.~Dutta, V.~Cadambe, and P.~Grover, ``{Short-Dot}: Computing large linear
  transforms distributedly using coded short dot products,'' in \emph{Advances
  In Neural Information Processing Systems}, 2016, pp. 2092--2100.

\bibitem{tandon2016gradient}
R.~Tandon, Q.~Lei, A.~G. Dimakis, and N.~Karampatziakis, ``Gradient coding,''
  \emph{arXiv preprint arXiv:1612.03301}, 2016.

\bibitem{bitar2017minimizing}
R.~Bitar, P.~Parag, and S.~E. Rouayheb, ``Minimizing latency for secure
  distributed computing,'' \emph{arXiv preprint arXiv:1703.017504}, 2017.

\bibitem{lee2017isitblockcode}
K.~Lee, C.~Suh, and K.~Ramchandran, ``High-dimensional coded matrix
  multiplication,'' in \emph{Proc. of IEEE International Symposium on
  Information Theory (ISIT)}, June 2017.

\bibitem{codedcomputationhet}
A.~Reisizadehmobarakeh, S.~Prakash, R.~Pedarsani, and S.~Avestimehr, ``Coded
  computation over heterogeneous clusters,'' \emph{arXiv preprint
  arXiv:1701.05973}, 2017.

\bibitem{lee2017isitmulticore}
K.~Lee, R.~Pedarsani, D.~Papailiopoulos, and K.~Ramchandran, ``Coded
  computation for multicore setups,'' in \emph{Proc. of IEEE International
  Symposium on Information Theory (ISIT)}, June 2017.

\bibitem{bertsekasbook}
D.~P. Bertsekas, \emph{Nonlinear programming}.\hskip 1em plus 0.5em minus
  0.4em\relax Athena scientific, 1999.

\bibitem{nedic2009distributed}
\BIBentryALTinterwordspacing
A.~Nedic and A.~E. Ozdaglar, ``Distributed subgradient methods for multi-agent
  optimization,'' \emph{{IEEE} Transactions on Automatic Control}, vol.~54,
  no.~1, pp. 48--61, 2009. [Online]. Available:
  \url{http://dx.doi.org/10.1109/TAC.2008.2009515}
\BIBentrySTDinterwordspacing

\bibitem{boyd2011distributed}
\BIBentryALTinterwordspacing
S.~P. Boyd, N.~Parikh, E.~Chu, B.~Peleato, and J.~Eckstein, ``Distributed
  optimization and statistical learning via the alternating direction method of
  multipliers,'' \emph{Foundations and Trends in Machine Learning}, vol.~3,
  no.~1, pp. 1--122, 2011. [Online]. Available:
  \url{http://dx.doi.org/10.1561/2200000016}
\BIBentrySTDinterwordspacing

\bibitem{bekkerman2011scaling}
R.~Bekkerman, M.~Bilenko, and J.~Langford, \emph{Scaling up machine learning:
  Parallel and distributed approaches}.\hskip 1em plus 0.5em minus 0.4em\relax
  Cambridge University Press, 2011.

\bibitem{duchi2012dual}
\BIBentryALTinterwordspacing
J.~C. Duchi, A.~Agarwal, and M.~J. Wainwright, ``Dual averaging for distributed
  optimization: Convergence analysis and network scaling,'' \emph{{IEEE}
  Transactions on Automatic Control}, vol.~57, no.~3, pp. 592--606, 2012.
  [Online]. Available: \url{http://dx.doi.org/10.1109/TAC.2011.2161027}
\BIBentrySTDinterwordspacing

\bibitem{chen2012diffusion}
\BIBentryALTinterwordspacing
J.~Chen and A.~H. Sayed, ``Diffusion adaptation strategies for distributed
  optimization and learning over networks,'' \emph{{IEEE} Transactions on
  Signal Processing}, vol.~60, no.~8, pp. 4289--4305, 2012. [Online].
  Available: \url{http://dx.doi.org/10.1109/TSP.2012.2198470}
\BIBentrySTDinterwordspacing

\bibitem{dean2012large}
\BIBentryALTinterwordspacing
J.~Dean, G.~Corrado, R.~Monga, K.~Chen, M.~Devin, Q.~V. Le, M.~Z. Mao,
  M.~Ranzato, A.~W. Senior, P.~A. Tucker, K.~Yang, and A.~Y. Ng, ``Large scale
  distributed deep networks,'' in \emph{Proc. 26th Annual Conference on Neural
  Information Processing Systems (NIPS)}, 2012, pp. 1232--1240. [Online].
  Available:
  \url{http://papers.nips.cc/paper/4687-large-scale-distributed-deep-networks}
\BIBentrySTDinterwordspacing

\bibitem{low2012distributed}
Y.~Low, D.~Bickson, J.~Gonzalez, C.~Guestrin, A.~Kyrola, and J.~M. Hellerstein,
  ``Distributed graphlab: a framework for machine learning and data mining in
  the cloud,'' \emph{Proc. VLDB Endowment}, vol.~5, no.~8, pp. 716--727, 2012.

\bibitem{kraska2013mlbase}
\BIBentryALTinterwordspacing
T.~Kraska, A.~Talwalkar, J.~C. Duchi, R.~Griffith, M.~J. Franklin, and M.~I.
  Jordan, ``{ML}base: {A} distributed machine-learning system,'' in \emph{Proc.
  6th Biennial Conference on Innovative Data Systems Research (CIDR)}, 2013.
  [Online]. Available:
  \url{http://www.cidrdb.org/cidr2013/Papers/CIDR13_Paper118.pdf}
\BIBentrySTDinterwordspacing

\bibitem{sparks2013mli}
\BIBentryALTinterwordspacing
E.~R. Sparks, A.~Talwalkar, V.~Smith, J.~Kottalam, X.~Pan, J.~E. Gonzalez,
  M.~J. Franklin, M.~I. Jordan, and T.~Kraska, ``{MLI:} an {API} for
  distributed machine learning,'' in \emph{Proc. {IEEE} 13th International
  Conference on Data Mining (ICDM)}, 2013, pp. 1187--1192. [Online]. Available:
  \url{http://dx.doi.org/10.1109/ICDM.2013.158}
\BIBentrySTDinterwordspacing

\bibitem{li2014scaling}
\BIBentryALTinterwordspacing
M.~Li, D.~G. Andersen, J.~W. Park, A.~J. Smola, A.~Ahmed, V.~Josifovski,
  J.~Long, E.~J. Shekita, and B.~Su, ``Scaling distributed machine learning
  with the parameter server,'' in \emph{Proc. 11th {USENIX} Symposium on
  Operating Systems Design and Implementation (OSDI)}, 2014, pp. 583--598.
  [Online]. Available:
  \url{https://www.usenix.org/conference/osdi14/technical-sessions/presentation/li_mu}
\BIBentrySTDinterwordspacing

\bibitem{JellyFish}
B.~Recht and C.~R{\'e}, ``Parallel stochastic gradient algorithms for
  large-scale matrix completion,'' \emph{Mathematical Programming Computation},
  vol.~5, no.~2, pp. 201--226, 2013.

\bibitem{bottou2012stochastic}
\BIBentryALTinterwordspacing
L.~Bottou, ``Stochastic gradient descent tricks,'' in \emph{Neural Networks:
  Tricks of the Trade - Second Edition}, 2012, pp. 421--436. [Online].
  Available: \url{http://dx.doi.org/10.1007/978-3-642-35289-8_25}
\BIBentrySTDinterwordspacing

\bibitem{DimmWitted}
C.~Zhang and C.~R{\'e}, ``Dimmwitted: A study of main-memory statistical
  analytics,'' \emph{Proc. VLDB Endowment}, vol.~7, no.~12, pp. 1283--1294,
  2014.

\bibitem{gurbuzbalaban2015random}
M.~G{\"u}rb{\"u}zbalaban, A.~Ozdaglar, and P.~Parrilo, ``Why random reshuffling
  beats stochastic gradient descent,'' \emph{arXiv preprint arXiv:1510.08560},
  2015.

\bibitem{ioffe2015batch}
S.~Ioffe and C.~Szegedy, ``Batch normalization: Accelerating deep network
  training by reducing internal covariate shift,'' \emph{arXiv preprint
  arXiv:1502.03167}, 2015.

\bibitem{MN1}
M.~A. Maddah-Ali and U.~Niesen, ``Fundamental limits of caching,'' \emph{IEEE
  Transactions on Information Theory}, vol.~60, no.~5, pp. 2856--2867, 2014.

\bibitem{MN2}
\BIBentryALTinterwordspacing
M.~A. Maddah{-}Ali and U.~Niesen, ``Decentralized coded caching attains
  order-optimal memory-rate tradeoff,'' \emph{{IEEE/ACM} Transactions on
  Networking}, vol.~23, no.~4, pp. 1029--1040, 2015. [Online]. Available:
  \url{http://dx.doi.org/10.1109/TNET.2014.2317316}
\BIBentrySTDinterwordspacing

\bibitem{Online}
\BIBentryALTinterwordspacing
R.~Pedarsani, M.~A. Maddah{-}Ali, and U.~Niesen, ``Online coded caching,'' in
  \emph{Proc. of {IEEE} International Conference on Communications (ICC)},
  2014, pp. 1878--1883. [Online]. Available:
  \url{http://dx.doi.org/10.1109/ICC.2014.6883597}
\BIBentrySTDinterwordspacing

\bibitem{Hierarchical}
N.~Karamchandani, U.~Niesen, M.~A. Maddah-Ali, and S.~Diggavi, ``Hierarchical
  coded caching,'' in \emph{Proc. of IEEE International Symposium on
  Information Theory (ISIT)}, 2014, pp. 2142--2146.

\bibitem{D2D}
\BIBentryALTinterwordspacing
M.~Ji, G.~Caire, and A.~F. Molisch, ``Fundamental limits of distributed caching
  in {D2D} wireless networks,'' in \emph{Proc. of {IEEE} Information Theory
  Workshop (ITW)}, 2013, pp. 1--5. [Online]. Available:
  \url{http://dx.doi.org/10.1109/ITW.2013.6691247}
\BIBentrySTDinterwordspacing

\bibitem{cmr}
S.~Li, M.~A. Maddah-ali, and S.~Avestimehr, ``Coded {MapReduce},'' Presented at
  the 53rd Annual Allerton conference on Communication, Control, and Computing,
  Monticello, IL, 2015.

\bibitem{indexbirk}
Y.~Birk and T.~Kol, ``Coding on demand by an informed source (iscod) for
  efficient broadcast of different supplemental data to caching clients,''
  \emph{IEEE Transactions on Information Theory}, vol.~52, no.~6, pp.
  2825--2830, June 2006.

\bibitem{indexbaryossef}
Z.~Bar-Yossef, Y.~Birk, T.~S. Jayram, and T.~Kol, ``Index coding with side
  information,'' \emph{IEEE Transactions on Information Theory}, vol.~57,
  no.~3, pp. 1479--1494, March 2011.

\bibitem{attiaglobecom16}
M.~A. Attia and R.~Tandon, ``Information theoretic limits of data shuffling for
  distributed learning,'' in \emph{Proc. of IEEE Global Communications
  Conference (GLOBECOM)}, Dec 2016, pp. 1--6.

\bibitem{attiaallertonworst16}
------, ``On the worst-case communication overhead for distributed data
  shuffling,'' in \emph{Proc. of 54th Annual Allerton Conference on
  Communication, Control, and Computing (Allerton)}, Sept 2016, pp. 961--968.

\bibitem{song2017pliable}
L.~Song and C.~Fragouli, ``A pliable index coding approach to data shuffling,''
  \emph{arXiv preprint arXiv:1701.05540}, 2017.

\bibitem{songzeli2016shuffling}
\BIBentryALTinterwordspacing
S.~Li, M.~A. Maddah{-}Ali, and A.~S. Avestimehr, ``A unified coding framework
  for distributed computing with straggling servers,'' \emph{CoRR}, vol.
  abs/1609.01690, 2016. [Online]. Available:
  \url{http://arxiv.org/abs/1609.01690}
\BIBentrySTDinterwordspacing

\bibitem{cover2012elements}
T.~M. Cover and J.~A. Thomas, \emph{Elements of information theory}.\hskip 1em
  plus 0.5em minus 0.4em\relax John Wiley \& Sons, 2012.

\bibitem{costello2004error}
D.~Costello and S.~Lin, \emph{Error control coding}.\hskip 1em plus 0.5em minus
  0.4em\relax New Jersey, 2004.

\bibitem{liang13}
\BIBentryALTinterwordspacing
G.~Liang and U.~C. Kozat, ``{TOFEC:} achieving optimal throughput-delay
  trade-off of cloud storage using erasure codes,'' in \emph{Proc. of {IEEE}
  Conference on Computer Communications (INFOCOM)}, 2014, pp. 826--834.
  [Online]. Available: \url{http://dx.doi.org/10.1109/INFOCOM.2014.6848010}
\BIBentrySTDinterwordspacing

\bibitem{boyd2004convex}
S.~Boyd and L.~Vandenberghe, \emph{Convex optimization}.\hskip 1em plus 0.5em
  minus 0.4em\relax Cambridge university press, 2004.

\bibitem{spark_mllib}
\BIBentryALTinterwordspacing
X.~Meng, J.~K. Bradley, B.~Yavuz, E.~R. Sparks, S.~Venkataraman, D.~Liu,
  J.~Freeman, D.~B. Tsai, M.~Amde, S.~Owen, D.~Xin, R.~Xin, M.~J. Franklin,
  R.~Zadeh, M.~Zaharia, and A.~Talwalkar, ``{ML}lib: Machine learning in apache
  spark,'' \emph{CoRR}, vol. abs/1505.06807, 2015. [Online]. Available:
  \url{http://arxiv.org/abs/1505.06807}
\BIBentrySTDinterwordspacing

\bibitem{openmpi}
``{Open MPI:} {O}pen source high performance computing,''
  \url{http://www.open-mpi.org}, accessed: 2015-11-25.

\bibitem{starcluster}
``Star{C}luster,'' \url{http://star.mit.edu/cluster/}, accessed: 2015-11-25.

\bibitem{blas}
``{BLAS (Basic Linear Algebra Subprograms)},''
  \url{http://www.netlib.org/blas/}, accessed: 2015-11-25.

\bibitem{songzeliisit16}
S.~Li, M.~A. Maddah-Ali, and A.~S. Avestimehr, ``Fundamental tradeoff between
  computation and communication in distributed computing,'' in \emph{Proc. of
  IEEE International Symposium on Information Theory (ISIT)}, July 2016, pp.
  1814--1818.

\bibitem{wirelessdc}
D.~Halperin, S.~Kandula, J.~Padhye, P.~Bahl, and D.~Wetherall, ``Augmenting
  data center networks with multi-gigabit wireless links,'' vol.~41, no.~4, pp.
  38--49, 2011.

\bibitem{zhu2014}
Y.~Zhu, X.~Zhou, Z.~Zhang, L.~Zhou, A.~Vahdat, B.~Y. Zhao, and H.~Zheng,
  ``Cutting the cord: a robust wireless facilities network for data centers,''
  in \emph{Proceedings of the 20th annual International Conference on Mobile
  Computing and networking (KCML)}.\hskip 1em plus 0.5em minus 0.4em\relax ACM,
  2014, pp. 581--592.

\bibitem{mobilecomputing}
M.~Y. Arslan, I.~Singh, S.~Singh, H.~V. Madhyastha, K.~Sundaresan, and S.~V.
  Krishnamurthy, ``Computing while charging: building a distributed computing
  infrastructure using smartphones,'' in \emph{Proc. 8th international
  conference on Emerging networking experiments and technologies}.\hskip 1em
  plus 0.5em minus 0.4em\relax ACM, 2012, pp. 193--204.

\end{thebibliography}

\end{document}